\def\eqref#1{equation~\ref{#1}}
\def\1{\bm{1}}
\DeclareMathAlphabet{\mathsfit}{\encodingdefault}{\sfdefault}{m}{sl}
\SetMathAlphabet{\mathsfit}{bold}{\encodingdefault}{\sfdefault}{bx}{n}
\DeclareMathOperator*{\argmin}{arg\,min}
\theoremstyle{plain}
\newtheorem{theorem}{Theorem}[section]
\newtheorem{lemma}[theorem]{Lemma}
\newtheorem{corollary}[theorem]{Corollary}
\newtheorem{assumption}[theorem]{Assumption}
\newtheorem*{remark}{Remark}
\def\Holder{{H\"{o}lder}}
\def\Cramer{Cram\'{e}r}
\newcommand{\op}{\mathrm{o}_{p}}
\newcommand{\pa}{\mathrm{\pa}}
\newcommand{\RN}[1]{%
  \textup{\uppercase\expandafter{\romannumeral#1}}%
}
\renewcommand*\eqref[1]{Eq.~(\ref{#1})}
\def\Cramer{Cram\'{e}r}
\def\Holder{{H\"{o}lder}}
\newcommand*{\email}[1]{\texttt{#1}}
\begin{document}

\def\spacingset#1{\renewcommand{\baselinestretch}%
{#1}\small\normalsize} \spacingset{1}

%%%%%%%%%%%%%%%%%%%%%%%%%%%%%%%%%%%%%%%%%%%%%%%%%%%%%%%%%%%%%%%%%%%%%%%%%%%%%%

\title{\bf Double Debiased Covariate Shift Adaptation Robust to Density-Ratio Estimation}
\author{Masahiro Kato\thanks{\email{masahiro-kato@fintec.co.jp}}\\
Data Analytics Division, Mizuho-DL Financial Technology Co., Ltd.\\
and \\
Kota Matsui \\
Department of Biostatistics, Nagoya University Graduate School of Medicine\\
and \\
Ryo Inokuchi \\
Data Analytics Division, Mizuho-DL Financial Technology Co., Ltd.}

\maketitle

\bigskip
 
\begin{abstract}
Consider a scenario where we have access to train data with both covariates and outcomes while test data only contains covariates. In this scenario, our primary aim is to predict the missing outcomes of the test data. With this objective in mind, we train parametric regression models under a covariate shift, where covariate distributions are different between the train and test data. For this problem, existing studies have proposed covariate shift adaptation via importance weighting using the density ratio. This approach averages the train data losses, each weighted by an estimated ratio of the covariate densities between the train and test data, to approximate the test-data risk. Although it allows us to obtain a test-data risk minimizer, its performance heavily relies on the accuracy of the density ratio estimation. Moreover, even if the density ratio can be consistently estimated, the estimation errors of the density ratio also yield bias in the estimators of the regression model's parameters of interest. To mitigate these challenges, we introduce a doubly robust estimator for covariate shift adaptation via importance weighting, which incorporates an additional estimator for the regression function. Leveraging double machine learning techniques, our estimator reduces the bias arising from the density ratio estimation errors. We demonstrate the asymptotic distribution of the regression parameter estimator. Notably, our estimator remains consistent if either the density ratio estimator or the regression function is consistent, showcasing its robustness against potential errors in density ratio estimation. Finally, we confirm the soundness of our proposed method via simulation studies.
\end{abstract}

\noindent%
{\it Keywords:}  Covariate shift adaptation, density-ratio estimation, double machine learning, semiparametric analysis, transfer learning
\vfill

\newpage
\spacingset{1.45} % DON'T change the spacing!

\section{Introduction}
Suppose that we have access to train data with both covariates and outcomes, while test data only contains covariates. Under a covariate shift, covariate distributions are different between the train and test data. In this setting, our goal is to predict the missing outcomes of the test data, and our interest lies in training regression models robust against a covariate shift\footnote{Our method can also be applied to binary classification by estimating logistic regression models. See Remark~\ref{rem:logit}.}. 

One of the golden rules in prediction is to train regression models by minimizing an empirical risk that approximates the population risk defined over the target test distribution. However, under the covariate shift, the distribution of the observable training data deviates from that of the test data, making it difficult to approximate the empirical risk of the test data from the training data.

The seriousness of the covariate shift problem depends on the correctness of the model specification. We say regression models are correctly specified if the error terms are independent of covariates under both train and test distributions.
When regression models are correctly specified, we can still obtain a regression model that asymptotically minimizes the test-data population risk by minimizing a train-data empirical risk. This is because the population risk minimizers of the train and test data are the same if the models are correctly specified. In contrast, when regression models are misspecified, their predictive performance can be severely deteriorated \citep{Shimodaira2000}. Thus, this issue has catalyzed interest in regression methods robust to covariate shift and model misspecification across various disciplines, including machine learning and statistics. 

One of the popular approaches for a covariate shift is importance weighting \citep{Shimodaira2000,sugiyama2008direct}. Specifically, \citet{Shimodaira2000} suggests training regression models by weighing the loss with the ratio of covariate densities between the train and test sets. Here, the weighted average loss over the train data converges to the population risk on the test data. Consequently, we can train the regression model by minimizing an estimate of the test-data risk. 

Given that the density ratio is typically unknown, subsequent studies have proposed estimating it from observations \citep{sugiyama2008direct}. By using density-ratio estimation methods, existing importance weighting methods adopt a two-stage approach: initially estimating the density ratio and then minimizing the estimates of the test-data population risk. 

While existing studies report successes of covariate shift adaptation using an estimated density ratio, the performance significantly depends on the accuracy of the density ratio estimation. If the density ratio estimator does not converge to the true density ratio, we cannot obtain regression models that minimize the test-data risk. Even if the density ratio is correctly estimated, the convergence rate of the density ratio estimator affects that of the regression models of interest. If the regression models are parametric models, our interest may lie in the $\sqrt{n}$-convergence rate for the parameters. However, the $\sqrt{n}$-convergence rate may not be obtained because the convergence rate of the density ratio is slower than $\sqrt{n}$. These issues motivate us to develop a covariate shift adaptation method robust to a failure of the density ratio estimation and reduce the bias incurred from the density ratio estimation. 

To tackle these issues, we develop a doubly robust (DR) estimator for covariate shift adaptation, which utilizes both density-ratio and conditional expected outcome estimators. First, this estimator is robust to inconsistency of the density ratio; that is, either the density-ratio and conditional expected outcome estimators are consistent, the DR estimator is also consistent. Additionally, to reduce the bias caused by the estimation error of the density ratio, we apply the double machine learning (DML) method proposed by \citet{ChernozhukovVictor2018Dmlf}. Following this method, we first split observations into subgroups and estimate the density ratio and conditional expected outcomes within them. Then, we add nonparametrically estimated conditional expected outcomes to the conventional empirical test-data risk estimator using the estimated density-ratio function. This procedure of DML enables us to show the asymptotic normality of the DR estimator by achieving a fast convergence rate for estimators of the regression model's parameters. 

\paragraph{Organization.}
The remainder of the paper is organized as follows. In Section~\ref{sec:model}, we define the problem-setting. In Section~\ref{sec:asymptotic_property}, we discuss properties of covariate shift adaptation with importance weighting using an estimated density ratio. As we will discuss in Section~\ref{sec:asymptotic_property}, when using an estimated density ratio, it is not easy to prove the asymptotic normality of the estimator given by covariate shift adaptation with importance weighting. Therefore, in Section~\ref{sec:doublyrobust}, we introduce a new estimator for covariate shift adaptation, which uses a DR structure and DML. We introduce related work in Section~\ref{sec:related} and Appendix~\ref{appdx:related}. In Section~\ref{sec:exp}, we show the performance of the proposed method using real datasets. 

\section{Basic Setup}
\label{sec:model}
We formulate our regression problem. Let $\big(Y, X\big)$ and $\big(\widetilde{Y}, \widetilde{X}\big)$ be two pairs of an outcome and covariates, where $Y, \widetilde{Y} \in\mathbb{R}$, and $X, \widetilde{X}\in\mathcal{X}\subset \mathbb{R}^d (d\geq 1)$. We refer to $\big(Y, X\big)$ as the train data and $\big(\widetilde{Y}, \widetilde{X}\big)$ as the test data. While we can observe $\big(Y, X\big)$ as a pair, we only observe $\widetilde{X}$ for the test data. Our goal is to predict $\widetilde{Y}$ by using realizations of $\big(Y, X\big)$ and $\widetilde{X}$, which is an observable counterpart of $\widetilde{Y}$. We refer to this setting as covariate shift \citep{Shimodaira2000}. The detailed formulations are shown below. 

\subsection{Observations}
In our setting, we observe $\{(Y_i, X_i)\}^{n}_{i=1}$ and $\{\widetilde{X}_j\}^{m}_{j=1}$, which are independently and identically distributed (i.i.d.) copies of $\big(Y, X\big)$ and $\widetilde{X}$. They are two stratified observations; that is, for fixed sample sizes $n$ and $m$, the two datasets are independently generated \citep{Imbens1996,Wooldridge2001}.\footnote{As a different setting, we can consider a situation where there are $n_0$ samples, and following a Bernoulli distribution, samples are classified into two datasets with $n$ and $m$ sample sizes. In this setting, $n$ and $m$ are random variables. We do not consider this setting in our study.} Following the terminology of machine learning, we call $\{(Y_i, X_i)\}^{n}_{i=1}$ the (labeled) train data and $\{\widetilde{X}_j\}^{m}_{j=1}$ the (unlabeled) test data.

\subsection{Covariate Shift}
For each pair of potential outcomes $\big(Y, X\big)$ and $\big(\widetilde{Y}, \widetilde{X}\big)$, we assume the data-generating process as
\begin{align*}
&(Y, X) \overset{\mathrm{i.i.d.}}{\sim} p(y, x),\ \ \ (\widetilde{Y}, \widetilde{X}) \overset{\mathrm{i.i.d.}}{\sim} q(y, x), 
\end{align*}
where $p(y, x)$ is the joint probability density function of $(Y, X)$ and $q(y, x)$ is the joint joint probability density function of $\big(\widetilde{Y}, \widetilde{X}\big)$. 

For $\big(Y, X\big)$ and $\big(\widetilde{Y}, \widetilde{X}\big)$, we make the following boundedness assumption.
\begin{assumption}[Boundedness of random variables]
    \label{asm:bounded_output}
    There exists a universal constant $C > 0$ such that $|Y|, |X_{(k)}|, |\widetilde{Y}|, |\widetilde{X}_{(k)}| < C$ for all $k\in\{1,2,\dots, d\}$, where $X_{(k)}$ and $\widetilde{X}_{(k)}$ are $k$-th elements of $X$ and $\widetilde{X}$. 
\end{assumption}

We also assume that the probability density functions of $Y$ conditioned on $X$ and $\widetilde{Y}$ conditioned on $\widetilde{X}$ are the same; that is,
\begin{align*}
p(y, x) = w(y| x)p(x),\ \ \ q(y, x) = w(y| x)q(x),
\end{align*}
where $w(y| x)$ is the conditional probability density function of $Y$ (resp. $\widetilde{Y}$) given $X$ (resp. $\widetilde{X}$), $p(x)$ and $q(x)$ are probability density functions of $X$ and $\widetilde{X}$, respectively. Here, let us define the density ratio $r(x) := \frac{q(x)}{p(x)}$, and we assume the overlap between the supports of $X$ and $\widetilde{X}$. 
\begin{assumption}
\label{asm:bounded_dens_ratio}
There exists an universal constant $0 < C_{r} < \infty$ such that for any $x\in\mathcal{X}$, 
$
0 \leq r(x) = \frac{p(x)}{q(x)} < C_r
$. 
\end{assumption}

Furthermore, we denote the conditional expectation function $\mathbb{E}[Y|X = x]$ as
\begin{align*}
    f_0(x) := \mathbb{E}[Y|X = x],
\end{align*}
where $\mathbb{E}[Y|X = x]$ takes an expectation with respect to the density $w(y|x)$. 
Here, the following hold:
\begin{align*}
    Y = f_0(X) + \varepsilon,\qquad \widetilde{Y} = f_0(\widetilde{X}) + \widetilde{\varepsilon},
\end{align*}
where $\varepsilon$ and $\widetilde{\varepsilon}$ are error terms such that $\mathbb{E}[\varepsilon | X ] = 0$ and $\mathbb{E}[\widetilde{\varepsilon} | \widetilde{X}]$ almost surely.

\subsection{Parametric Regression Models}
To investigate the relationship between $Y_i$ and $X_i$, we model the conditional expectation $f_0(x)$ by a parametrized regression model $g_{\bm{\beta}}$ with a $k$-dimensional parameter $\bm{\beta}$, and define a set of functions as $\mathcal{G}=\{g_{\bm{\beta}}(x): \bm{\beta}\in\Theta\subset \mathbb{R}^k\}$, where $\Theta$ is some compact parameter space. Then, we regress $X$ on $Y$ to determine a parameter $\bm{\beta}$. For the train and test data, let us define the mean squared error minimizers as
\begin{align*}
    \bm{\beta}_0 := \argmin_{\bm{\beta}\in \Theta} \mathbb{E}_p\left[\left( Y - g_{\bm{\beta}}(X) \right)^2\right], \qquad
    \widetilde{\bm{\beta}}_0 := \argmin_{\bm{\beta}\in \Theta} \mathbb{E}_q\left[\left( \widetilde{Y} - g_{\bm{\beta}}(\widetilde{X}) \right)^2\right],
\end{align*}
where $\mathbb{E}_p$ and $\mathbb{E}_q$ denote expectations over $p(y, x)$ and $q(y, x)$. 
To identify the parameters in estimation, we assume uniqueness of $\widetilde{\bm{\beta}}_0$ and differentiability of $g_{\bm{\beta}}$ with respect to $\bm{\beta}$. 
\begin{assumption}
\label{eq:unique_beta}
The risk minimizer $\widetilde{\bm{\beta}}_0$ uniquely exists.
    Regression models $g_{\bm{\beta}}$ are second-order differentiable with respect to $\bm{\beta}$.  
\end{assumption}
Note that when the optimal parameter $\beta_0$ is unique, the corresponding asymptotic distribution is also unique.

\begin{remark}
    \label{rem:uniqueness}
    If we allow non-unique asymptotic distributions, we do not have to assume the global uniqueness of $\widetilde{\bm{\beta}}_0$. Instead, we can assume uniqueness only around a local optimum $\widehat{\bm{\beta}_0}$.
\end{remark}

\begin{remark}
\label{rem:logit}
    A parametric model $g_{\bm{\beta}}$ includes a logistic regression model $g_{\bm{\beta}}(x) = {1}/\big\{1 + \exp\left( - x^\top \bm{\beta}\right)\big\}$ \citep{VaartA.W.vander1998As}.
\end{remark}
%the dimension $K$ is less than or equal to the number of labeled samples $N$, $a$-th element of the basis is bounded by a constant $C_Z$ for $a\in\{1,2,\dots, K\}$, and $\mathbb{E}_{p(x)}[Z(X_i)Z^\top(X_i)]$ is finite and invertible.
%\begin{assumption}
%\label{asm:indentify_model}
%For a basis $Z: \mathcal{X}\to \mathbb{R}^K$ such that $K\leq N$, $|\dot{g}_{\bm{\beta}_0}(x)|_{\infty} < C_Z$ for a constant $C_Z > 0$, and $\mathbb{E}_{p(x)}[\dot{g}_{\bm{\beta}_0}(x)Z^\top(X)] < \infty$, there exists $\mathbb{E}_{p(x)}%[\dot{g}_{\bm{\beta}_0}(x)Z^\top(X)]^{-1}$.
%\end{assumption}

\subsection{Model Specification}
We consider two scenarios for the regression problem. In the first scenario, a regression model is correctly specified; that is, $\mathcal{G}$ includes the regression function $f_0(x) = \mathbb{E}[Y|X=x]$. In the second scenario, $\mathcal{G}$ does not include the regression function $f_0(x) = \mathbb{E}[Y|X=x]$. We say that regression models are correctly specified in the former scenario, and regression models are misspecified in the latter scenario. 

\paragraph{Models are correctly specified.}
If $f_0 \in \mathcal{G}$, then it holds that $\bm{\beta}_0 = \widetilde{\bm{\beta}}_0$, $f_0 = g_{\bm{\beta}_0} = g_{\widetilde{\bm{\beta}}_0}$, and 
\begin{align*}
Y = g_{\bm{\beta}_0}(X) + \varepsilon,\qquad \widetilde{Y} = g_{\bm{\beta}_0}(\widetilde{X}) + \widetilde{\varepsilon},
\end{align*}
In this scenario, we can train a model that asymptotically minimizes the test-data risk by minimizing an empirical risk over the train data; that is, without covariate shift adaptation, we can consistently estimate the risk minimizer for the test-data risk.

\paragraph{Models are misspecified.} If $f_0 \notin \mathcal{G}$, then it may not hold that $\bm{\beta}_0 = \widetilde{\bm{\beta}}^*$ and $f_0 = g_{\bm{\beta}_0} = g_{\widetilde{\bm{\beta}}^*}$. If $g_{\bm{\beta}_0} \neq g_{\widetilde{\bm{\beta}}^*}$, then it implies that $g_{\bm{\beta}_0}$ may not perform well for data generated from $q(y, x)$. In this scenario, it is desirable to approximate the test-data risk and train regression models by minimizing it. 

Thus, the problem of a covariate shift is serious when models are misspecified. To address this issue, \citet{Shimodaira2000} proposes the following covariate shift adaptation via importance weighting. 

\subsection{Covariate Shift Adaptation}
Our goal lies in predicting a test response variable $\widetilde{Y}_j$, an unobserved counterpart of $\widetilde{X}_j$, based on the parametric regression models estimated from $\{(Y_i, X_i)\}^{n}_{i=1}$ and $\{\widetilde{X}_j\}^{m}_{j=1}$. The problem of predicting a test response variable $\widetilde{Y}_j$ under covariate shift ($p(x) \neq q(x)$) and model misspecification is referred to as \emph{covariate shift adaptation}. 

As we review in the following section, a typical approach for this problem is to employ importance weighting \citep{Shimodaira2000}. However, while this approach allows us to adapt a covariate shift, it is vulnerable to a failure of density ratio estimation, and its estimator has a slow convergence rate owing to the estimation error of the density ratio.

In this study, we propose a method for covariate shift adaptation that 
is robust to a failure of density ratio estimation and yields an estimator with a faster convergence rate. We refer to our approach as \emph{DR covariate shift adaptation} because it returns a consistent estimator if one of the two functions in the estimator is correctly specified, as explained in Section~\ref{sec:dr}.

\section{Covariate Shift Adaptation via Importance Weighting and Density-Ratio Estimation}
\label{sec:asymptotic_property}
For obtaining the risk minimizer under the covariate shift, importance weighting with the density ratio has been used \citep{Shimodaira2000,Reddi2015}. 

\subsection{Covariate Shift Adaptation via Importance Weighting}
To estimate the parameter $\widetilde{\bm{\beta}}_0$, which is the minimizer of the prediction risk under the overate shift, we face the problem that we do not have access to the labeled data and cannot obtain a sample approximation of the test-data population risk $\mathcal{R}(\bm{\beta}) := \mathbb{E}_{q}\left[(Y-g_{\bm{\beta}}\left(X\right))^2\right]$.

Let us define  the \emph{density ratio} between $q$ and $p$ as
\[r_0(x) = \frac{q(x)}{p(x)}\]
\cite{Shimodaira2000} shows that we can obtain the population risk for the test distribution using the following weighted squared risk for the training distribution:
\begin{align*}
\mathbb{E}_{q}\left[(Y-g_{\bm{\beta}}\left(X\right))^2\right] = \mathbb{E}_{p}\left[(Y-g_{\bm{\beta}}\left(X\right))^2r_0(X)\right].
\end{align*}

Then, the risk minimizer under covariate shift adaptation with importance weighting is given as 
\begin{align*}
\widehat{\bm{\beta}}^* = \argmin_{\bm{\beta} \in \Theta} \frac{1}{n}\sum^n_{i=1} \left(Y_i -g_{\bm{\beta}}\left(X_i\right)\right)^2 r_0(X_i),
\end{align*}

\subsection{Density Ratio Estimation}
Since we do not know the true density ratio $r_0$, we need to estimate it. A naive approach is to estimate the two densities of the numerator $q(x)$ and denominator $p(x)$ separately, then take the ratio of the estimated densities. For instance, we can separately estimate the two probability densities by kernel density estimation. However, this approach may not work well due to the division between the two estimated densities, which magnifies the estimation errors.

%When using these direct density ratio estimation methods with modern machine learning methods, such as Lasso regression, an estimator of $r_0$ may violate the Donsker condition as pointed out in DML literature \citep{ChernozhukovVictor2018Dmlf}. A typical approach for showing the asymptotic normality of a semiparametric estimator with non-Donsker estimators of the nuisance parameters is sample splitting. Recently, \citet{ChernozhukovVictor2018Dmlf} organizes sample-splitting method as cross-fitting. \citet{KatoUehara2020} applies cross-fitting to covariate shift adaptation in treatment effect estimation. In this study, we also apply cross-fitting to estimate the linear regression model. 

For example, this study employs \emph{Least-Squares Importance Fitting} (LSIF) in experimental studies, which uses the squared error to estimate the density ratio \citep{Kanamori2009}. The reason for this choice is that there is an algorithm called \emph{unconstrained Least-Squares Importance Fitting} (uLSIF) with a computational advantage. We can obtain the closed-form solution just by solving the linear equations. %Thus, uLSIF is numerically stable when it is regularized properly. 
By using the uLSIF, given a hypothesis class $\mathcal{H}$, we obtain $\widehat{r}_{n, m}$, an estimator of $r_0$, by
\begin{align*}
&\widehat{r}_{n,m} = \argmin_{s\in\mathcal{H}} \Bigg\{\frac{1}{2n} \sum^n_{i=1}s^2(X_i)- \frac{1}{m} \sum^m_{j=1}s(\widetilde{X}_j) + \mathcal{R}(s)\Bigg\},
\end{align*}
where $\mathcal{R}:\mathcal{H}\to \mathbb{R}^+$ is a regularization term. Then, we define a regression parameter estimator as
\begin{align*}
\widehat{\bm{\beta}} = \argmin_{\bm{\beta} \in \Theta} \frac{1}{n}\sum^n_{i=1} \left(Y_i -g_{\bm{\beta}}\left(X_i\right)\right)^2 \widehat{r}_{n,m}(X_i),
\end{align*}

\subsection{Covariate Shift Adaptation with Estimated Density-Ratio Function}
The convergence rates of density ratio estimators are shown by existing studies, such as \cite{Sugiyama2012}, \citet{Kanamori2012}, \citet{Kato2020nnbr}, and \citet{zheng2022an}. For example, when using kernel regression, the convergence rate is given as follows \citep{Kanamori2012}: 
\begin{align*}
\left\|\widehat{r}_{n.m}(X) - r_0(X)\right\|_2 = O_p\left(\min\left\{n, m\right\}^{-\frac{1}{2(1+\gamma)}}\right),
\end{align*}
where $0 < \gamma < 1$ is a constant depending on the functional form of the density ratio. These results imply that the estimation error remains in the estimation of $\mathbb{E}_{q}\left[(Y-g_{\bm{\beta}}\left(X\right))^2\right] = \mathbb{E}_{p}\left[(Y-g_{\bm{\beta}}\left(X\right))^2r_0(X)\right]$ as
\begin{align*}
    \mathbb{E}_{p}\left[(Y-g_{\bm{\beta}}\left(X\right))^2\widehat{r}_{n.m}(X)\right] 
    = \mathbb{E}_{p}\left[(Y-g_{\bm{\beta}}\left(X\right))^2r_0(X)\right] + O_p\left(\min\left\{n, m\right\}^{-\frac{1}{2(1+\gamma)}}\right).\nonumber
\end{align*}
We typically consider $\sqrt{n}$-consistency for estimators of $\bm{\beta}_0$. However, this result implies that in the estimation of $\mathbb{E}_{q}\left[(Y-g_{\bm{\beta}}\left(X\right))^2\right]$, there remains non-negligible bias term that does not disappear, which remains in the estimation of $\bm{\beta}_0$. 

In the following sections, we propose our estimator that attains (i) covariate shift robustness under misspecified models and (ii) asymptotic normality, which implies $\sqrt{n}$-consistency.

\section{DR Covariate Shift Adaptation}
\label{sec:doublyrobust}
When estimating the density ratio with modern machine learning methods, the estimator often violates the Donsker condition. To show the asymptotic normality of an estimator of $\bm{\beta}_0$ with such a nuisance estimator, we introduce DML \citep{ChernozhukovVictor2018Dmlf}, which employs a sample-splitting technique called cross-fitting \citep{klaassen1987,ZhengWenjing2011CTME}.

For brevity, without loss of generality, let $n$ and $m$ be even numbers, and we use two-fold cross-fitting. We can extend our method to a case with $L$-fold cross-fitting. For $n_{(1)} = n_{(2)} = n/2$ and $m_{(1)} = m_{(2)} = m/2$, we randomly separate the samples $\{(Y_i, X_i)\}^{n}_{i=1}$ into two subsets $D_{(1)} =  \{(Y_{(i, 1)}, X_{(i, 1)})\}^{n_{(1)}}_{i=1}$ and $D_{(2)} = \{(Y_{(i, 2)}, X_{(i, 2)})\}^{n_{(2)}}_{i=1}$ with sample sizes $n_{(1)}$ and $n_{(2)}$; similarly, we separate samples $\{\widetilde{X}_j\}^{m}_{j=1}$ into two subsets $\widetilde{D}_{(1)} = \{\widetilde{X}_{(j, 1)}\}^{m_{(1)}}_{j=1}$ and $\widetilde{D}_{(2)} = \{\widetilde{X}_{(j, 2)}\}^{m_{(2)}}_{j=1}$ with sample sizes $m_{(1)}$ and $m_{(2)}$.

First, by using $\{(Y_{(i, 2)}, X_{(i, 2)})\}^{n_{(2)}}_{i=1}$ and $\{\widetilde{X}_{(j, 2)}\}^{m_{(2)}}_{j=1}$, we estimate $f_0$ and $r_0$, respectively. We estimate them to satisfy Assumption~\ref{asm:nuisance_conv}. Let $\widehat{f}_{(-1)}$ and $\widehat{r}_{(-1)}$ be the estimators.

Then, by using $\{(Y_{(i, 1)}, X_{(i, 1)})\}^{n_{(1)}}_{i=1}$ and $\{\widetilde{X}_{(j, 1)}\}^{m_{(1)}}_{j=1}$, we obtain a sample approximate of the risk as
\begin{align*}
&\widehat{\mathcal{R}}_{(1)}(\bm{\beta}) :=\frac{1}{n_{(1)}}\sum^{n_{(1)}}_{i=1}\Big\{\left(Y_{(i, 1)} - g_{\bm{\beta}}\big(X_{(i, 1)}\big) \right)^2 
- \left(\widehat{f}_{(-1)}\big(X_{(i, 1)}\big) - g_{\bm{\beta}}\big(X_{(i, 1)}\big)\right)^2\Big\} \widehat{r}_{(-1)}\big(X_{(i, 1)}\big)\\
&~~~~~~~~~~~~~~~+\frac{1}{m_{(1)}}\sum^{m_{(1)}}_{j=1}\left\{\left(\widehat{f}_{(-1)}(\widetilde{X}_{(j, 1)}) - g_{\bm{\beta}}(\widetilde{X}_{(j, 1)})\right)^2\right\}.
\end{align*}

Similarly, we define another empirical risk as 
\begin{align*}
&\widehat{\mathcal{R}}_{(2)}(\bm{\beta}) := \frac{1}{n_{(2)}}\sum^{n_{(2)}}_{i=1}\Big\{\left(Y_{(i, 2)} - g_{\bm{\beta}}(X_{(i, 2)}) \right)^2
- \left(\widehat{f}_{(-2)}(X_{(i, 2)}) - g_{\bm{\beta}}(X_{(i, 2)})\right)^2\Big\} \widehat{r}_{(-2)}(X_{(i, 2)})\\
&~~~~~~~~~~~~~~~+\frac{1}{m_{(2)}}\sum^{m_{(2)}}_{j=1}\left\{\left(\widehat{f}_{(-2)}(\widetilde{X}_{(j, 2)}) - g_{\bm{\beta}}(\widetilde{X}_{(j, 2)})\right)^2\right\}.
\end{align*}
We show the pseudo-code in Appendix~\ref{appdx:pseudo}.

By minimizing the two empirical risks, we obtain the DR estimator $\widehat{\bm{\beta}}^{\mathrm{DR}}$ for $\bm{\beta}_0$; that is,
\begin{align}
\label{eq:opt}
    \widehat{\bm{\beta}}^{\mathrm{DR}} = \argmin_{\bm{\beta} \in \Theta}\Big\{\widehat{\mathcal{R}}_{(1)}(\bm{\beta}) + \widehat{\mathcal{R}}_{(2)}(\bm{\beta})\Big\}
\end{align}
We can also estimate $f_0(x)$ by $\widehat{f}^{\mathrm{DR}}(x) = g_{\widehat{\bm{\beta}}^{\mathrm{DR}}}(x)$. 

\begin{remark}[Linear models]
Note that we have more simplified form of $\widehat{\bm{\beta}}^{\mathrm{DR}}$ when the regression model $g_{\bm{\beta}}$ is a linear model; that is, $g_{\bm{\beta}}(x) = Z^\top(x)\bm{\beta}$, where $Z:\mathcal{X} \to \mathbb{R}^K$ is some basis function, which maps $x\in\mathcal{X}$ to a $k$-dimensional vector. In this case, we have
\begin{align*}
    &\widehat{\bm{\beta}}^{\mathrm{DR}} = \left\{\sum_{\ell\in\{1, 2\}}\frac{1}{m_{(\ell)}}\sum^{m_{(\ell)}}_{j=1}Z\left(\tilde{X}_{(j, \ell)}\right)Z^\top\left(\tilde{X}_{(j, \ell)}\right)\right\}^{-1}\\
&~~~~~~~~~~\times \sum_{\ell\in\{1, 2\}}\Bigg\{\frac{1}{n_{(\ell)}}\sum^{n_{(\ell)}}_{i=1}Z(X_{(i, \ell)})\left(Y_i - \hat{f}_{(-\ell))}(X_{(i, \ell)})\right)\hat{r}_{(-\ell))}(X_{(i, \ell)})\\
&~~~~~~~~~~~~~~~ + \frac{1}{m_{(\ell)}}\sum^{m_{(\ell)}}_{j=1}Z\left(\tilde{X}_{(j, \ell)}\right)\hat{f}_{(-\ell)}\left(\tilde{X}_{(j, \ell)}\right)\Bigg\}.
\end{align*}
\end{remark}

\subsection{Asymptotic Normality}
\label{subsec:asymp_prp}
Let us assume $m=\rho n$ for a fixed $\rho > 0$ and consider asymptotics when $n \to \infty$.

First, we consider the first order condition of \eqref{eq:opt}, which is given as $h(\widehat{\bm{\beta}}^{\mathrm{DR}}) = \bm{0}$, where $h(\widehat{\bm{\beta}}^{\mathrm{DR}})$ is a $k$-dimensional vector defined as
\begin{align*}
h(\bm{\beta}) &:= \frac{\partial \sum_{\ell \in\{1, 2\}}\widehat{\mathcal{R}}_{(\ell)}(\bm{\beta})}{\partial \bm{\beta}}\\
&=\sum_{\ell \in\{1, 2\}}\Bigg\{\frac{1}{n_{(\ell)}}\sum^{n_{(\ell)}}_{i=1}\dot{g}_{\bm{\beta}}(X_i)\Big\{Y_{(i, \ell)}  - \widehat{f}_{(-\ell)}(X_{(i, \ell)})\Big\}\widehat{r}_{(-\ell)}(X_{(i, \ell)})\\
%&\ \ \ \ \ \ \ \ \ \ \ \ \ \ \ \ \ \ \ \ \ \ \ \ \ \ \ \ \ \ \ \ \ \ \ \ \ \ \ \ \ \ \ \ \ \ \times \widehat{r}_{(-\ell)}(X_{(i, \ell)})\\
&~~~~~~~~~~+ \frac{1}{m_{(\ell)}}\sum^{m_{(\ell)}}_{j=1}\dot{g}_{\bm{\beta}}(\widetilde{X}_{(j, \ell)})\left\{\widehat{f}_{(-\ell)}(\widetilde{X}_{(j, \ell)}) - g_{\bm{\beta}}(\widetilde{X}_{(j, \ell)})\right\}\Bigg\},
\end{align*}
where $\dot{g}_{\bm{\beta}}(X_i) = \frac{\partial g_{\bm{\beta}}(X_i)}{\partial \bm{\beta}}$.
We also define 
\begin{align*}
    \dot{h}(\bm{\beta}) 
    &:= \frac{\partial h(\bm{\beta})}{\partial \bm{\beta}^\top}\\
    &=\sum_{\ell\in\{1, 2\}}\Bigg\{ \frac{1}{n_{(\ell)}}\sum^{n_{(\ell)}}_{i=1}\left(Y_{(i, \ell)} - \widehat{f}_{(-\ell)}(X_{(i, \ell)})\right)\widehat{r}_{(-\ell)}(X_{(i, \ell)})\ddot{g}_{\bm{\beta}}(X_{(i, \ell)})\\
    %&\ \ \ \ \ \ \ \ \ \ \ \ \ \ \ \ \ \ \ \ \ \ \ \ \ \ \ \ \ \ \ \ \ \ \ \times\widehat{r}_{(-\ell)}(X_{(i, \ell)})\ddot{g}_{\bm{\beta}}(X_{(i, \ell)})\\
    &\ \ \ - \frac{1}{m_{(\ell)}}\sum^{m_{(\ell)}}_{j=1}\Big\{ \dot{g}_{\bm{\beta}}(\widetilde{X}_{(j, \ell)})\dot{g}^\top_{\bm{\beta}}(\widetilde{X}_{(j, \ell)}) + \left(\widehat{f}_{(-\ell)}(\widetilde{X}_{(j, \ell)}) - g_{\bm{\beta}}(\widetilde{X}_{(j, \ell)})\right)\ddot{g}_{\bm{\beta}}(\widetilde{X}_{(j, \ell)}) \Big\} \Bigg\}.\\
    %&\ \ \ \ \ \ \ \ \ \ \ + \left(\widehat{f}_{(-\ell)}(\widetilde{X}_{(j, \ell)}) - g_{\bm{\beta}}(\widetilde{X}_{(j, \ell)})\right)\ddot{g}_{\bm{\beta}}(\widetilde{X}_{(j, \ell)}) \Big\} \Bigg\}.
\end{align*}
Then, we make the following assumption.
\begin{assumption}
\label{asm:linear_dep}
    For any sequence $\{\bm{\beta}^*_i\}^\infty_{i=1}$ such that $\bm{\beta}^*_i \xrightarrow{\mathrm{p}} \widetilde{\bm{\beta}}_0$, 
    \begin{align*}
        &\left(\dot{h}(\bm{\beta})\big|_{\bm{\beta} \bm{\beta}^*_i}\right)_{i,j} = \left(\dot{h}(\dot{h}(\bm{\beta})\big|_{\bm{\beta} = \bm{\beta}_0}\right)_{i,j} = D^\top_{i,j} + o_p(1).
    \end{align*}
    holds for each $i,j \in \{1,2,\dots,k\}$, where $D$ is a linearly independent ($k\times k$)-matrix. 
\end{assumption}

We make the following assumptions for the convergence rates of estimators of nuisance parameters.
\begin{assumption}
\label{asm:nuisance_conv}
    For $\ell \in \{1,2\}$, the followings hold:
\begin{align*}
    &\left\|\widehat{r}_{(-\ell)} - r_0\right\|_2 = o_p(1),\qquad \left\|\widehat{f}_{(-\ell)} - f_0\right\|_2 = o_p(1)\\
    &\left\|\widehat{r}_{(-\ell))} - r_0\right\|_2\left\|\widehat{f}_{(-\ell)} - f_0\right\|_2 = o_p\left(1/\min\{n, m\}^{1/2}\right)),
\end{align*}
as $n, m \to \infty$.
\end{assumption}
For example, the uLSIF with kernel ridge regression satisfies this convergence rate \citet{Kanamori2012}. Density-ratio estimators with neural networks under Bregman-divergence minimization also satisfy this convergence rate \citep{Kato2020nnbr}.

Note that under Assumptions~\ref{asm:linear_dep} and \ref{asm:nuisance_conv}, $D^\top = \mathbb{E}_q\left[ \dot{g}_{\bm{\beta}}(\widetilde{X})\dot{g}^\top_{\bm{\beta}}(\widetilde{X})\right]$ holds. 

In DML, the combination of Assumption~\ref{asm:nuisance_conv} and the cross-fitting allows us to reduce the bias incurred from the estimation error of $r_0$. This is because we can make the convergence rate of $\left\|\widehat{r}_{(-\ell))} - r_0\right\|_2\left\|\widehat{f}_{(-\ell)} - f_0\right\|_2$ faster than $\left\|\widehat{r}_{(-\ell))} - r_0\right\|_2$ due to $\left\|\widehat{f}_{(-\ell)} - f_0\right\|_2$. 

Then, we show the asymptotic distributions as follows. The proof is shown in Appendix~\ref{appendixthm:asymp_dist}. 
\begin{theorem}
\label{thm:main}
Suppose that Assumptions~\ref{asm:bounded_output}--\ref{eq:unique_beta}, \ref{asm:linear_dep}--\ref{asm:nuisance_conv} hold. If $m = \rho n$ for some universal constant $\rho$ and $f_0 \in \mathcal{G}$, then the DR estimator $\widehat{\bm{\beta}}^{\mathrm{SDB}}$ has the following asymptotic distribution:
\label{thm:asymp_dist}
    \begin{align*}
        &\sqrt{n}\left(\bm{\beta}_0 - \widehat{\bm{\beta}}^{\mathrm{DR}}\right)\xrightarrow{\mathrm{d}} \mathcal{N}\left(\bm{0}, \Omega\right),
    \end{align*}
    where
    \begin{align*}
        \Omega = \mathbb{E}_q\left[\dot{g}_{\bm{\beta}_0}(\widetilde{X})\dot{g}^\top_{\bm{\beta}_0}(\widetilde{X})\right]^{-1} \mathbb{E}_q\left[\sigma^2(\widetilde{X})r(\widetilde{X})\dot{g}_{\bm{\beta}_0}(\widetilde{X})\dot{g}^\top_{\bm{\beta}_0}(\widetilde{X})\right]\mathbb{E}_q\left[\dot{g}_{\bm{\beta}_0}(\widetilde{X})\dot{g}^\top_{\bm{\beta}_0}(\widetilde{X})\right]^{-1}.
    \end{align*}
\end{theorem}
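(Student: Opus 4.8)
The plan is to treat $\widehat{\bm{\beta}}^{\mathrm{DR}}$ as a Z-estimator defined by the first-order condition $h(\widehat{\bm{\beta}}^{\mathrm{DR}}) = \bm{0}$ and to linearize it around $\bm{\beta}_0$. First I would establish consistency $\widehat{\bm{\beta}}^{\mathrm{DR}} \xrightarrow{\mathrm{p}} \bm{\beta}_0$: using the boundedness in Assumption~\ref{asm:bounded_output} and compactness of $\Theta$, a uniform law of large numbers gives $\sup_{\bm{\beta}}\big|\widehat{\mathcal{R}}_{(1)}(\bm{\beta}) + \widehat{\mathcal{R}}_{(2)}(\bm{\beta}) - \mathcal{R}(\bm{\beta})\big| \xrightarrow{\mathrm{p}} 0$, where the Neyman-orthogonal construction of the risk and the nuisance consistency of Assumption~\ref{asm:nuisance_conv} make the probability limit coincide with the true test-data risk $\mathcal{R}(\bm{\beta})$; identification then follows from the uniqueness in Assumption~\ref{eq:unique_beta} together with $\bm{\beta}_0 = \widetilde{\bm{\beta}}_0$ (valid since $f_0 \in \mathcal{G}$). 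A mean-value expansion of the first-order condition then yields
\begin{align*}
\sqrt{n}\left(\widehat{\bm{\beta}}^{\mathrm{DR}} - \bm{\beta}_0\right) = -\left[\dot{h}(\bar{\bm{\beta}})\right]^{-1}\sqrt{n}\,h(\bm{\beta}_0),
\end{align*}
for some $\bar{\bm{\beta}}$ on the segment between $\widehat{\bm{\beta}}^{\mathrm{DR}}$ and $\bm{\beta}_0$, so that $\bar{\bm{\beta}}\xrightarrow{\mathrm{p}}\bm{\beta}_0$. By Assumption~\ref{asm:linear_dep} the Jacobian satisfies $\dot{h}(\bar{\bm{\beta}}) \xrightarrow{\mathrm{p}} D^\top = \mathbb{E}_q[\dot{g}_{\bm{\beta}_0}(\widetilde{X})\dot{g}^\top_{\bm{\beta}_0}(\widetilde{X})]$, which supplies the two outer inverse factors of $\Omega$. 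It therefore remains to analyze $\sqrt{n}\,h(\bm{\beta}_0)$.

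The core of the argument is to show that $\sqrt{n}\,h(\bm{\beta}_0)$ is asymptotically equivalent to its oracle version $\sqrt{n}\,h^*(\bm{\beta}_0)$ in which the estimated nuisances are replaced by $(r_0,f_0)$. Because $f_0 = g_{\bm{\beta}_0}$ in the correctly specified regime, the test-data summand of $h^*(\bm{\beta}_0)$ vanishes identically, leaving only the train-data part $n^{-1}\sum_i \dot{g}_{\bm{\beta}_0}(X_i)\,\varepsilon_i\,r_0(X_i)$. This has mean zero since $\mathbb{E}[\varepsilon\mid X]=0$, and, writing $\sigma^2(x) := \mathbb{E}[\varepsilon^2\mid X=x]$ and using the change of measure $\mathbb{E}_p[\phi\,r_0]=\mathbb{E}_q[\phi]$, the central limit theorem gives
\begin{align*}
\sqrt{n}\,h^*(\bm{\beta}_0)\xrightarrow{\mathrm{d}}\mathcal{N}\!\left(\bm{0},\;\mathbb{E}_q\!\left[\sigma^2(\widetilde{X})\,r(\widetilde{X})\,\dot{g}_{\bm{\beta}_0}(\widetilde{X})\dot{g}^\top_{\bm{\beta}_0}(\widetilde{X})\right]\right),
\end{align*}
which is exactly the middle factor of $\Omega$. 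Combining this with the Jacobian limit through Slutsky's theorem, and observing that the sign flip between $\bm{\beta}_0 - \widehat{\bm{\beta}}^{\mathrm{DR}}$ and $\widehat{\bm{\beta}}^{\mathrm{DR}} - \bm{\beta}_0$ leaves a centered Gaussian unchanged, assembles the sandwich form of $\Omega$.

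The remaining and most delicate task is to prove $\sqrt{n}\,\big(h(\bm{\beta}_0) - h^*(\bm{\beta}_0)\big) = o_p(1)$, i.e. that the nuisance estimation error is negligible. I would exploit the Neyman orthogonality of the population moment $M(\bm{\beta};r,f) = \mathbb{E}_p[\dot{g}_{\bm{\beta}}(X)\{Y-f(X)\}r(X)] + \mathbb{E}_q[\dot{g}_{\bm{\beta}}(\widetilde{X})\{f(\widetilde{X})-g_{\bm{\beta}}(\widetilde{X})\}]$: its Gateaux derivative in $f$ at $(r_0,f_0)$ cancels by the change of measure, while its derivative in $r$ vanishes because $\mathbb{E}[\varepsilon\mid X]=0$. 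Since $M$ is linear in each of $f$ and $r$ separately, the only surviving second-order term is the bilinear cross term $-\mathbb{E}_p[\dot{g}_{\bm{\beta}_0}(X)\,(\widehat{f}_{(-\ell)}-f_0)(X)\,(\widehat{r}_{(-\ell)}-r_0)(X)]$, which by Cauchy--Schwarz and boundedness is controlled by $\|\widehat{f}_{(-\ell)}-f_0\|_2\,\|\widehat{r}_{(-\ell)}-r_0\|_2 = o_p(n^{-1/2})$ under Assumption~\ref{asm:nuisance_conv}; multiplied by $\sqrt{n}$ it is $o_p(1)$.

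The main obstacle is the complementary empirical-process term. Letting $\mathbb{P}_{n,(\ell)}$ denote the empirical average over the evaluation fold $\ell$, one must show
\begin{align*}
\sqrt{n}\,\big(\mathbb{P}_{n,(\ell)}-\mathbb{E}_p\big)\Big[\dot{g}_{\bm{\beta}_0}\big\{(Y-\widehat{f}_{(-\ell)})\widehat{r}_{(-\ell)} - (Y-f_0)r_0\big\}\Big] = o_p(1),
\end{align*}
together with its test-data analogue, \emph{without} a Donsker condition on the machine-learning nuisance estimators. Here the cross-fitting is essential: conditioning on the complementary fold renders $\widehat{f}_{(-\ell)},\widehat{r}_{(-\ell)}$ fixed functions independent of the evaluation sample, so the conditional variance of the increment is bounded (using boundedness) by $\|\widehat{f}_{(-\ell)}-f_0\|_2^2 + \|\widehat{r}_{(-\ell)}-r_0\|_2^2 = o_p(1)$, and a conditional Chebyshev argument delivers the claim. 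This is the standard DML mechanism of \citet{ChernozhukovVictor2018Dmlf}, and carrying it out carefully under the present two-sample, importance-weighted structure is where most of the technical work lies.
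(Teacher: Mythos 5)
Your proposal is correct and follows essentially the same route as the paper: linearize the first-order condition $h(\widehat{\bm{\beta}}^{\mathrm{DR}})=\bm{0}$ around $\bm{\beta}_0$, identify the Jacobian limit $\mathbb{E}_q\left[\dot{g}_{\bm{\beta}_0}(\widetilde{X})\dot{g}^\top_{\bm{\beta}_0}(\widetilde{X})\right]$, and reduce $\sqrt{n}\,h(\bm{\beta}_0)$ to the oracle score $n^{-1}\sum_i \dot{g}_{\bm{\beta}_0}(X_i)\,\varepsilon_i\, r_0(X_i)$ (the test-data part vanishing because $f_0=g_{\bm{\beta}_0}$), whose CLT supplies the middle factor of $\Omega$. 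Your two-part control of the nuisance error---the bilinear cross term bounded via Cauchy--Schwarz and Assumption~\ref{asm:nuisance_conv}, and the empirical-process increment handled by cross-fitting plus a conditional Chebyshev argument---is exactly the content of the paper's Lemma~\ref{lem:nuisance} (Steps 1 and 2 of its proof in Appendix~\ref{appdx:proof_lemma}).
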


From this theorem and the Taylor expansion, we can obtain the following point-wise asymptotic normality for $\widehat{f}^{\mathrm{DR}}(x) = g_{\widehat{\bm{\beta}}^{\mathrm{DR}}}(x)$. 
\begin{corollary}
\label{cor:asymp_dist}
Under the same conditions in Theorem~\ref{thm:main}, then the DR estimator $\widehat{f}^{\mathrm{DR}}(x)$ has the following asymptotic distribution for each $x\in\mathcal{X}$:
    \begin{align*}
        &\sqrt{n}\left(f_0(x) - \widehat{f}^{\mathrm{DR}}(x)\right)\xrightarrow{\mathrm{d}} \mathcal{N}\left(\bm{0}, \dot{g}^\top_{\bm{\beta}_0}(x)\Omega\dot{g}_{\bm{\beta}_0}(x)\right).
    \end{align*}
\end{corollary}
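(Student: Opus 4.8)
The plan is to derive this pointwise statement from Theorem~\ref{thm:main} by the multivariate delta method, viewing $\bm{\beta}\mapsto g_{\bm{\beta}}(x)$ as a fixed smooth scalar map for each $x\in\mathcal{X}$. Since we are in the correctly specified regime $f_0\in\mathcal{G}$, we have $f_0(x)=g_{\bm{\beta}_0}(x)$, and by construction $\widehat{f}^{\mathrm{DR}}(x)=g_{\widehat{\bm{\beta}}^{\mathrm{DR}}}(x)$; hence the quantity of interest is exactly $g_{\bm{\beta}_0}(x)-g_{\widehat{\bm{\beta}}^{\mathrm{DR}}}(x)$, a smooth function of the estimation error $\widehat{\bm{\beta}}^{\mathrm{DR}}-\bm{\beta}_0$ whose limiting law is already known.

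First I would Taylor-expand $g_{\widehat{\bm{\beta}}^{\mathrm{DR}}}(x)$ around $\bm{\beta}_0$ to second order, which is legitimate because Assumption~\ref{eq:unique_beta} guarantees $g_{\bm{\beta}}$ is second-order differentiable in $\bm{\beta}$. After multiplying by $\sqrt{n}$ this yields
\[
\sqrt{n}\left(f_0(x)-\widehat{f}^{\mathrm{DR}}(x)\right)=\dot{g}^\top_{\bm{\beta}_0}(x)\,\sqrt{n}\left(\bm{\beta}_0-\widehat{\bm{\beta}}^{\mathrm{DR}}\right)-\sqrt{n}\,R_n(x),
\]
where $R_n(x)$ is the quadratic remainder, of the form $\tfrac{1}{2}(\widehat{\bm{\beta}}^{\mathrm{DR}}-\bm{\beta}_0)^\top\ddot{g}_{\bm{\beta}^\dagger}(x)(\widehat{\bm{\beta}}^{\mathrm{DR}}-\bm{\beta}_0)$ for some intermediate point $\bm{\beta}^\dagger$ on the segment joining $\widehat{\bm{\beta}}^{\mathrm{DR}}$ and $\bm{\beta}_0$.

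Next I would control $R_n(x)$. Theorem~\ref{thm:main} already delivers $\sqrt{n}$-consistency, so $\widehat{\bm{\beta}}^{\mathrm{DR}}-\bm{\beta}_0=O_p(n^{-1/2})$; combined with boundedness of $\ddot{g}_{\bm{\beta}}(x)$ over the compact $\Theta$ (a consequence of Assumptions~\ref{asm:bounded_output} and~\ref{eq:unique_beta}), the remainder satisfies $R_n(x)=O_p(n^{-1})$, so $\sqrt{n}\,R_n(x)=o_p(1)$ and drops out in the limit. Because $\dot{g}_{\bm{\beta}_0}(x)$ is a deterministic vector for each fixed $x$, Slutsky's theorem together with the linear-transform property of the multivariate Gaussian converts the limit law of Theorem~\ref{thm:main} into $\dot{g}^\top_{\bm{\beta}_0}(x)\,\sqrt{n}(\bm{\beta}_0-\widehat{\bm{\beta}}^{\mathrm{DR}})\xrightarrow{\mathrm{d}}\mathcal{N}(\bm{0},\dot{g}^\top_{\bm{\beta}_0}(x)\Omega\dot{g}_{\bm{\beta}_0}(x))$, which is the asserted conclusion.

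The argument is essentially routine once Theorem~\ref{thm:main} is in hand; the only point requiring genuine care is the remainder control, namely verifying that the second-order Taylor term is $o_p(n^{-1/2})$ after scaling. This is where the compactness of $\Theta$ and the boundedness of the covariates (ensuring $\ddot{g}_{\bm{\beta}}(x)$ is uniformly bounded near $\bm{\beta}_0$) are used, and it is also where the corollary genuinely relies on the fast $\sqrt{n}$-rate from the DML construction rather than on mere consistency: with only consistency one could not guarantee the quadratic remainder vanishes under $\sqrt{n}$-scaling.
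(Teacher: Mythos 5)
Your proposal is correct and takes essentially the same approach as the paper: the paper's own proof is exactly a Taylor expansion of $g_{\widehat{\bm{\beta}}^{\mathrm{DR}}}(x)$ around $\bm{\beta}_0$, with the remainder eliminated by the $\sqrt{n}$-consistency $\sqrt{n}\left(\bm{\beta}_0 - \widehat{\bm{\beta}}^{\mathrm{DR}}\right) = O_p(1)$ from Theorem~\ref{thm:main}, followed by the linear-transformation property of the Gaussian limit (the delta method). Your explicit Lagrange-form control of the quadratic remainder is in fact slightly more careful than the paper's $o(\cdot)$ bookkeeping, but it is the same argument.
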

\begin{proof}
    From the Taylor expansion of $g_{\widehat{\bm{\beta}}^{\mathrm{DR}}}(x) $ around $\widehat{\bm{\beta}}^{\mathrm{DR}} = \bm{\beta}_0$, we have $
        \sqrt{n}\left(g_{\bm{\beta}_0}(x) - g_{\widehat{\bm{\beta}}^{\mathrm{DR}}}(x)\right)= \sqrt{n}\dot{g}^\top_{\bm{\beta}_0}(x)\left(\widehat{\bm{\beta}}^{\mathrm{DR}} - \bm{\beta}_0\right) + o\left(\sqrt{n}\left(\bm{\beta}_0 - \widehat{\bm{\beta}}^{\mathrm{DR}} \right)^2\right)$. 
    From $g_{\widehat{\bm{\beta}}^{\mathrm{DR}}}(x) = \widehat{f}^{\mathrm{DR}}(x)$, $g_{\bm{\beta}_0}(x) = f_0(x)$, and $\sqrt{n}\left(\bm{\beta}_0 - \widehat{\bm{\beta}}^{\mathrm{DR}}\right) = O_p(1)$, the statement holds. 
\end{proof}

\subsection{Double Robustness}
\label{sec:dr}
Our proposed covariate shift adaptation is DR in the sense that if either the density ratio or the conditional expectation is consistently estimated, the parameter $\widetilde{\bm{\beta}}_0$ is also consistently estimated. 

Let us define another risk estimator as follows:
\begin{align*}
\ddot{\mathcal{R}}(\bm{\beta}) &:= \sum_{\ell\in\{1, 2\}}\frac{1}{n_{(\ell)}}\sum^{n_{(\ell)}}_{i=1}\Bigg\{\left(Y_{(i, \ell)} - g_{\bm{\beta}}(X_{(i, \ell)}) \right)^2 - \left(f^\dagger(X_{(i, \ell)}) - g_{\bm{\beta}}(X_{(i, \ell)})\right)^2\Bigg\} r^\dagger(X_{(i, \ell)})\\
%&\ \ \ \ \ \ \ \ \ \ \ \ - \left(f^\dagger(X_{(i, \ell)}) - g_{\bm{\beta}}(X_{(i, \ell)})\right)^2\Bigg\} r^\dagger(X_{(i, \ell)})\\
&\ \ \ \ \ \ \ \ \ \ \ \ + \sum_{\ell\in\{1, 2\}}\frac{1}{m_{(\ell)}}\sum^{m_{(\ell)}}_{j=1}\left(f^\dagger(\widetilde{X}_{(j, \ell)}) - g_{\bm{\beta}}(\widetilde{X}_{(j, \ell)})\right)^2\\
&= \frac{1}{n}\sum^{n}_{i=1}\big\{\left(Y_i - g_{\bm{\beta}}(X_i) \right)^2 - \left(f^\dagger(X_i) - g_{\bm{\beta}}(X_i)\right)^2\big\} r^\dagger(X_i) + \frac{1}{m}\sum^{m}_{j=1}\left(f^\dagger(\widetilde{X}_i) - g_{\bm{\beta}}(\widetilde{X}_i)\right)^2,
%&\ \ \ \ \ \ \ \ \ \ \ \ + \frac{1}{m}\sum^{m}_{j=1}\left(f^\dagger(\widetilde{X}_i) - g_{\bm{\beta}}(\widetilde{X}_i)\right)^2,
\end{align*}
which replaces $\widehat{f}_{(-\ell)}$ and $\widehat{r}_{(-\ell)}$ with $f^\dagger$ and $r^\dagger$, respectively. Then, we have
\begin{align*}
    \ddot{\mathcal{R}}(\bm{\beta}) \approx
    \mathbb{E}_p\left[\Big\{\left(Y - g_{\bm{\beta}}(X) \right)^2 - \left(f^\dagger(X) - g_{\bm{\beta}}(X)\right)^2\Big\} r^\dagger(X)\right] 
    + \mathbb{E}_q\left[\left(f^\dagger(\widetilde{X}) - g_{\bm{\beta}}(\widetilde{X})\right)^2\right].
\end{align*}
Using this equation, we confirm that if either $f_0$ or $r_0$ is consistently estimated, we can consistently estimate the test-data population risk $\mathcal{R}(\bm{\beta})$ as shown below. Then, we can consistently estimate $\widetilde{\bm{\beta}}_0$ by minimizing the estimate of $\mathcal{R}(\bm{\beta})$. 

\paragraph{Case~(i): $f^\dagger = f_0 = g_{\bm{\beta}_0}$.}
If $f^\dagger = f_0 = g_{\bm{\beta}_0}$, then
$\bm{\beta}_0 - \widehat{\bm{\beta}}^{\dagger} \approx 0$ holds, where we used 
\begin{align*}
    &\ddot{\mathcal{R}}(\bm{\beta}) \approx \mathbb{E}_q\left[\left(f_0(\widetilde{X}) - g_{\bm{\beta}}(\widetilde{X})\right)^2\right] + C,
\end{align*}
where $C > 0$ is a constant that is irrelevant to $\bm{\beta}$. Here, 
\begin{align*}
    \widetilde{\bm{\beta}}_0 = \argmin_{\bm{\beta}\in \Theta}\ddot{\mathcal{R}}(\bm{\beta}) = \argmin_{\bm{\beta}\in \Theta} \mathbb{E}_q\left[\left( \widetilde{Y} - g_{\bm{\beta}}(\widetilde{X}) \right)^2\right]
\end{align*}
holds. Note that $r^\dagger$ is some function that may not be $r_0$. Therefore, we can adapt a covariate shift if we consistently estimate $f_0$ and cannot estimate $r_0$ well. 

\paragraph{Case~(ii): $r^\dagger = r_0$.}
Additionally, if $r^\dagger = r_0$, then 
\begin{align*}
        \ddot{\mathcal{R}}(\bm{\beta}) \approx \mathbb{E}_q\left[\left(\widetilde{Y} - g_{\bm{\beta}}(\widetilde{X})\right)^2\right],
    \end{align*}
holds, where we used the following two equations:
\begin{align*}
&\mathbb{E}_p\left[\left(Y - g_{\bm{\beta}}(X) \right)^2 r_0(X)\right] = \mathbb{E}_q\left[\left(Y - g_{\bm{\beta}}(X) \right)^2\right], \\ 
&\mathbb{E}_p\left[\left\{\left(f^\dagger(X) - g_{\bm{\beta}}(X)\right)^2\right\} r_0(X)\right]
= \mathbb{E}_q\left[\left(f^\dagger(\widetilde{X}) - g_{\bm{\beta}}(\widetilde{X})\right)^2\right].
%&-\mathbb{E}_p\left[\left\{\left(f^\dagger(X) - g_{\bm{\beta}}(X)\right)^2\right\} r_0(X)\right]
%+ \mathbb{E}_q\left[\left(f^\dagger(\widetilde{X}) - g_{\bm{\beta}}(\widetilde{X})\right)^2\right] = 0.    
\end{align*}
Thus, if either the density ratio or the conditional expectation is consistently estimated, we can consistently estimate the test-data population risk.

{\footnotesize 
\begin{table*}[t]
    \centering
    \caption{Results in Section~\ref{sec:exp}. The mean squared errors are reported with the standard deviations in parenthesis.}
    \label{tbl:exp_res}
    \vspace{-0.2cm}
    \scalebox{0.7}[0.7]{
    \begin{tabular}{|l|l|r|r|r|r|r|r|r|}
    \hline
       & &  \multicolumn{2}{|c|}{OLS}  &     \multicolumn{2}{|c|}{WLS}  &     \multicolumn{1}{|c|}{NP}  &       \multicolumn{2}{|c|}{DR} \\
          \hline
   & & Misspecified & Correct & Misspecified & Correct & - & Misspecified & Correct \\
    \hline
    \multirow{2}{*}{Model~1} & Indep $X$ &  7.295 (6.794) &  1.008 (0.064) &  3.820 (2.161) &  1.022 (0.070) &  1.025 (0.076) &  3.085 (1.337) &  1.078 (0.186) \\
    & Corr $X$ &  6.359 (5.624) &  1.019 (0.074) &  3.596 (1.825) &  1.026 (0.074) &  1.03 (0.080) &  3.054 (1.332) &  1.049 (0.126) \\
    \multirow{2}{*}{Model~2} & Indep $X$ &  0.194 (0.042) &  0.154 (0.053) &  0.195 (0.050) &  0.156 (0.063) &  0.158 (0.038) &  0.189 (0.042) &  0.198 (0.140) \\
    & Corr $X$ &  0.206 (0.035) &  0.168 (0.036) &  0.214 (0.059) &  0.170 (0.036) &  0.183 (0.043) &  0.203 (0.035) &  0.175 (0.040)  \\
    \bottomrule
    \end{tabular}
    }
    \vspace{-0.3cm}
\end{table*}
}

\subsection{Self-Debiased Covariate Shift Adaptation}
\label{sec:sdb_csa}
In the previous subsection, to estimate $f_0$ by $g_{\bm{\beta}}$, we assume that another estimator of $f_0$ is obtainable with a certain convergence rate. However, covariate shift adaptation mainly considers a situation where we cannot train high-dimensional, nonparametric, or other complicated models, including neural networks and random forests. This is because covariate shift adaptation is a method for addressing model misspecification, and if we use such models with large $\mathcal{G}$, we may not encounter the problem of model misspecification. Therefore, in this subsection, for a situation where we cannot use another model for estimating $f_0$, we propose another estimator with asymptotic normality only using a model of interest $g_{\bm{\beta}}$ and a density ratio model as a variant of the previously defined DR estimator.

We define the following self-debiased (SDB)  estimator:
\[\widehat{\bm{\beta}}^{\mathrm{SDB}} := \argmin_{\bm{\beta} \in \Theta}\sum_{\ell\in\{1, 2\}}\widehat{\mathcal{R}}^{\mathrm{SDB}}_{(\ell)}(\bm{\beta}),\] 
where
\begin{align*}
\widehat{\mathcal{R}}^{\mathrm{SDB}}_{(\ell)}(\bm{\beta}) &:= \frac{1}{n_{(\ell)}}\sum^{n_{(\ell)}}_{i=1}\Big\{\left(Y_{(i, \ell)} - g_{\bm{\beta}}(X_{(i, \ell)}) \right)^2
 - \left(g_{\widehat{\bm{\beta}}_{(-\ell)}}(X_{(i, \ell)}) - g_{\bm{\beta}}(X_{(i, \ell)})\right)^2\Big\} \widehat{r}^\alpha_{(-\ell)}(X_{(i, \ell)})\\
&~~~~~ + \frac{1}{m_{(\ell)}}\sum^{m_{(1)}}_{j=1}\left\{\left(g_{\widehat{\bm{\beta}}_{(-\ell)}}(\widetilde{X}_{(j, \ell)}) - g_{\bm{\beta}}(\widetilde{X}_{(j, \ell)})\right)^2\right\},
\end{align*}
and $\widehat{\bm{\beta}}_{(-\ell)} := \argmin_{\bm{\beta} \in \Theta}\frac{1}{n_{(-\ell)}}\sum^{n_{(-\ell)}}_{i=1} \big(Y_{(i, \ell)} -g_{\bm{\beta}}\left(X_{(i, \ell)}\right)\big)^2$; that is, we replace $\widehat{f}_{(-\ell)}$ in the DR estimator with $g_{\widehat{\bm{\beta}}_{(-\ell)}}$.
We refer to this estimator as SDB estimator because it conducts DML by using a model of interest $g_{\widehat{\bm{\beta}}_{(-\ell)}}$ itself.

Here, we make the following assumption. 
\begin{assumption}
\label{asm:nuisance_conv2}
    For $\ell \in \{1,2\}$, $\left\|\widehat{r}_{(-\ell)} - r_0\right\|_2 = o_p(1)$ holds, 
as $n, m \to \infty$.
\end{assumption}

Then, we obtain the following corollary.
\begin{corollary}
\label{thm:main2}
Suppose that Assumptions~\ref{asm:bounded_output}--\ref{eq:unique_beta}, \ref{asm:linear_dep} and \ref{asm:nuisance_conv2} hold. If $m = \rho n$ for some universal constant $\rho$, then the SDB estimator $\widehat{\bm{\beta}}^{\mathrm{SDB}}$ has the following asymptotic distribution:
    \begin{align*}
        &\sqrt{n}\left(\widetilde{\bm{\beta}}_0 - \widehat{\bm{\beta}}^{\mathrm{SDB}}\right)\xrightarrow{\mathrm{d}} \mathcal{N}\left(\bm{0}, \widetilde{\Omega}\right),
    \end{align*}
    where
    \begin{align*}
        &\widetilde{\Omega} = \mathbb{E}_q\left[\dot{g}_{\widetilde{\bm{\beta}}_0}(\widetilde{X})\dot{g}^\top_{\widetilde{\bm{\beta}}_0}(\widetilde{X})\right]^{-1}\mathbb{E}_q\left[\sigma^2(\widetilde{X})r(\widetilde{X})\dot{g}_{\widetilde{\bm{\beta}}_0}(\widetilde{X})\dot{g}^\top_{\widetilde{\bm{\beta}}_0}(\widetilde{X})\right]\mathbb{E}_q\left[\dot{g}_{\widetilde{\bm{\beta}}_0}(\widetilde{X})\dot{g}^\top_{\widetilde{\bm{\beta}}_0}(\widetilde{X})\right]^{-1}.
    \end{align*}
    If $f_0 \in \mathcal{G}$, then $sqrt{n}\left(\bm{\beta}_0 - \widehat{\bm{\beta}}^{\mathrm{SDB}}\right)\xrightarrow{\mathrm{d}} \mathcal{N}\left(\bm{0}, \widetilde{\Omega}\right)$.
\end{corollary}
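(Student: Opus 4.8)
The plan is to treat $\widehat{\bm{\beta}}^{\mathrm{SDB}}$ as the DR estimator of \eqref{eq:opt} in which the nuisance regressor $\widehat{f}_{(-\ell)}$ has been instantiated by the self plug-in $g_{\widehat{\bm{\beta}}_{(-\ell)}}$, and to reduce the two claims to the machinery already developed for Theorem~\ref{thm:main}. The first thing I would record is that $\widehat{\bm{\beta}}_{(-\ell)}$ is an ordinary (unweighted) least-squares M-estimator on the complementary fold; under Assumptions~\ref{asm:bounded_output} and \ref{eq:unique_beta} it is $\sqrt{n}$-consistent for the train-risk minimizer $\bm{\beta}_0$, so $\|g_{\widehat{\bm{\beta}}_{(-\ell)}} - g_{\bm{\beta}_0}\|_2 = O_p(n^{-1/2})$ uniformly. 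When $f_0 \in \mathcal{G}$ one has $g_{\bm{\beta}_0} = f_0$, hence the self plug-in converges to the \emph{true} regression function at the parametric rate and $\|g_{\widehat{\bm{\beta}}_{(-\ell)}} - f_0\|_2\,\|\widehat{r}_{(-\ell)} - r_0\|_2 = O_p(n^{-1/2})\,o_p(1) = o_p(n^{-1/2})$. Thus Assumption~\ref{asm:nuisance_conv} is implied by the weaker Assumption~\ref{asm:nuisance_conv2}, and Theorem~\ref{thm:main} applies verbatim with $\bm{\beta}_0 = \widetilde{\bm{\beta}}_0$; this already yields the last sentence of the corollary.

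For the general statement (target $\widetilde{\bm{\beta}}_0$, with a possibly misspecified model) I would argue directly. Consistency follows from the Case~(ii) identity of Section~\ref{sec:dr}: once $\widehat{r}_{(-\ell)} \xrightarrow{\mathrm{p}} r_0$, the population counterpart of $\sum_\ell \widehat{\mathcal{R}}^{\mathrm{SDB}}_{(\ell)}$ equals $\mathbb{E}_q[(\widetilde{Y} - g_{\bm{\beta}}(\widetilde{X}))^2]$ \emph{regardless of} the limit $g_{\bm{\beta}_0}$ of the plug-in, so its minimizer is $\widetilde{\bm{\beta}}_0$. Writing $h^{\mathrm{SDB}}$ for the gradient of the SDB objective (the analogue of $h$ with $\widehat{f}_{(-\ell)}$ replaced by $g_{\widehat{\bm{\beta}}_{(-\ell)}}$), I would expand $h^{\mathrm{SDB}}(\widehat{\bm{\beta}}^{\mathrm{SDB}}) = \bm{0}$ by the mean value theorem to get $\sqrt{n}(\widehat{\bm{\beta}}^{\mathrm{SDB}} - \widetilde{\bm{\beta}}_0) = -[\dot{h}^{\mathrm{SDB}}]^{-1}\sqrt{n}\,h^{\mathrm{SDB}}(\widetilde{\bm{\beta}}_0) + o_p(1)$, where Assumption~\ref{asm:linear_dep} supplies the invertible limit $D$ of the Jacobian, which reduces to $\mathbb{E}_q[\dot{g}_{\widetilde{\bm{\beta}}_0}(\widetilde{X})\dot{g}^\top_{\widetilde{\bm{\beta}}_0}(\widetilde{X})]$ in the correctly specified case exactly as in the identity noted after Assumption~\ref{asm:nuisance_conv}.

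The core is then the analysis of $\sqrt{n}\,h^{\mathrm{SDB}}(\widetilde{\bm{\beta}}_0)$ through the DML decomposition into (a) a sample average evaluated at the limiting nuisances $(r_0, g_{\bm{\beta}_0})$, (b) a centered empirical-process term, and (c) a bias term. Here I would isolate the mechanism that makes the method ``self-debiased''. Let $M(\bm{\beta}; r, f) := \mathbb{E}_p[\dot{g}_{\bm{\beta}}(X)(Y - f(X))r(X)] + \mathbb{E}_q[\dot{g}_{\bm{\beta}}(\widetilde{X})(f(\widetilde{X}) - g_{\bm{\beta}}(\widetilde{X}))]$ be the population moment; a direct computation gives $\partial_f M(\widetilde{\bm{\beta}}_0; r_0, f) = 0$ for every $f$ (the $f$-term cancels once $r = r_0$). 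The empirical image of this orthogonality is that the fluctuation $\bm{\delta} := \widehat{\bm{\beta}}_{(-\ell)} - \bm{\beta}_0 = O_p(n^{-1/2})$ enters the train-fold average through $\mathbb{E}_p[\dot{g}_{\widetilde{\bm{\beta}}_0}\dot{g}^\top_{\bm{\beta}_0}r_0] = \mathbb{E}_q[\dot{g}_{\widetilde{\bm{\beta}}_0}\dot{g}^\top_{\bm{\beta}_0}]$ with a minus sign and the test-fold average through the same matrix with a plus sign, so the two contributions cancel at order $\sqrt{n}\bm{\delta}$; this is what permits a merely $\sqrt{n}$-consistent (not faster) plug-in. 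Term~(b) is killed by cross-fitting exactly as in Theorem~\ref{thm:main}, since $g_{\widehat{\bm{\beta}}_{(-\ell)}}$ and $\widehat{r}_{(-\ell)}$ are built on the fold complementary to the one averaged. The surviving contribution from (a) is the train-fold noise average $n^{-1/2}\sum_i \dot{g}_{\widetilde{\bm{\beta}}_0}(X_i)\varepsilon_i r_0(X_i)$, whose variance is $\mathbb{E}_q[\sigma^2(\widetilde{X})r(\widetilde{X})\dot{g}_{\widetilde{\bm{\beta}}_0}(\widetilde{X})\dot{g}^\top_{\widetilde{\bm{\beta}}_0}(\widetilde{X})]$; sandwiched by $D^{-1}$ this produces $\widetilde{\Omega}$.

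I expect the main obstacle to be term~(c), the density-ratio bias $\mathbb{E}_p[\dot{g}_{\widetilde{\bm{\beta}}_0}(X)(f_0(X) - \widehat{f}(X))(\widehat{r}_{(-\ell)}(X) - r_0(X))]$. Using $\partial_f M = 0$ and $\widehat{f} = g_{\widehat{\bm{\beta}}_{(-\ell)}} = g_{\bm{\beta}_0} + O_p(n^{-1/2})$, it reduces to $\mathbb{E}_p[\dot{g}_{\widetilde{\bm{\beta}}_0}(X)(f_0(X) - g_{\bm{\beta}_0}(X))(\widehat{r}_{(-\ell)}(X) - r_0(X))] + o_p(n^{-1/2})$. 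In the correctly specified regime $f_0 - g_{\bm{\beta}_0} \equiv 0$, so this leading piece vanishes identically, the whole bias is $o_p(n^{-1/2})$, and, since then $Y - g_{\bm{\beta}_0} = \varepsilon$, the variance computed above is exactly the stated $\widetilde{\Omega}$ — this is the heart of why the correctly specified conclusion is clean. The delicate case is genuine misspecification, where this term is only $O_p(\|\widehat{r}_{(-\ell)} - r_0\|_2) = o_p(1)$ under Assumption~\ref{asm:nuisance_conv2}, so controlling $\sqrt{n}$ times it is the crux; I would handle it either by exploiting an additional orthogonality of the residual direction $\dot{g}_{\widetilde{\bm{\beta}}_0}(f_0 - g_{\bm{\beta}_0})$ against the density-ratio estimation error or by invoking a faster rate for $\widehat{r}_{(-\ell)}$, thereby reducing the argument to the structure of the correctly specified case.
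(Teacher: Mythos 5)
Your reduction of the $f_0 \in \mathcal{G}$ claim is correct and is exactly the paper's (implicit) route: since $\widehat{\bm{\beta}}_{(-\ell)}$ is a standard M-estimator on the complementary fold, $\|g_{\widehat{\bm{\beta}}_{(-\ell)}} - f_0\|_2 = \|g_{\widehat{\bm{\beta}}_{(-\ell)}} - g_{\bm{\beta}_0}\|_2 = O_p(n^{-1/2})$ under correct specification, so the product condition $\|g_{\widehat{\bm{\beta}}_{(-\ell)}} - f_0\|_2\,\|\widehat{r}_{(-\ell)} - r_0\|_2 = o_p(n^{-1/2})$ of Assumption~\ref{asm:nuisance_conv} follows from Assumption~\ref{asm:nuisance_conv2} alone, and Theorem~\ref{thm:main} (equivalently Lemma~\ref{lem:nuisance}, whose Step~2 is a \Holder\ bound on precisely this product) applies verbatim with $\widehat{f}_{(-\ell)} = g_{\widehat{\bm{\beta}}_{(-\ell)}}$. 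Your observation that the $O_p(n^{-1/2})$ fluctuation of the self plug-in cancels between the $p$-fold and $q$-fold averages, because $\mathbb{E}_p[\,\cdot\, r_0] = \mathbb{E}_q[\,\cdot\,]$, is also a correct account of why a merely $\sqrt{n}$-consistent (rather than faster-than-$\sqrt{n}$) plug-in suffices here.

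The gap is the first, general statement of the corollary (centered at $\widetilde{\bm{\beta}}_0$ without $f_0 \in \mathcal{G}$), and you have located it rather than closed it. As you compute, the $\sqrt{n}$-scaled bias is $\sqrt{n}\,\mathbb{E}_p\big[\dot{g}_{\widetilde{\bm{\beta}}_0}(X)\big(f_0(X) - g_{\bm{\beta}_0}(X)\big)\big(\widehat{r}_{(-\ell)}(X) - r_0(X)\big)\big] + o_p(1)$, and under misspecification $f_0 - g_{\bm{\beta}_0}$ is a fixed nonzero function, so Assumption~\ref{asm:nuisance_conv2} (consistency of $\widehat{r}_{(-\ell)}$ with no rate) yields only $\sqrt{n}\cdot o_p(1)$, which is not controlled. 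Neither of your proposed remedies is available under the stated hypotheses: there is no orthogonality between the fixed direction $\dot{g}_{\widetilde{\bm{\beta}}_0}(f_0 - g_{\bm{\beta}_0})$ and an arbitrary consistent density-ratio error, and a faster rate for $\widehat{r}_{(-\ell)}$ is exactly what Assumption~\ref{asm:nuisance_conv2} declines to assume. Moreover, even granting the bias away, under misspecification the limit variance would not be $\widetilde{\Omega}$: the $q$-fold average of $\dot{g}_{\widetilde{\bm{\beta}}_0}(g_{\bm{\beta}_0} - g_{\widetilde{\bm{\beta}}_0})$ is no longer degenerate and contributes an additional $\rho^{-1}$-scaled variance term, and the train-fold residual is $\varepsilon + (f_0 - g_{\bm{\beta}_0})$ rather than $\varepsilon$. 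To be fair, this is not a defect you could have repaired by recovering the paper's argument: the paper states this corollary without proof, and its own machinery covers only the $f_0 \in \mathcal{G}$ case for the SDB plug-in --- which is the part you proved, by the same route the paper intends. Your identification of the obstruction in the misspecified case is the substantive content of your write-up, and it points at a claim the paper itself does not establish.
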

This estimator does not have the DR property; that is, we can adapt a covariate shift only when $r_0$ is consistently estimated. In this sense, the property is the same as the conventional covariate shift adaptation. However, unlike it, the SDB estimator reduces the bias caused by the density ratio estimator, and we can show the fast $\sqrt{n}$-convergence rate.

\section{Related Work}
\label{sec:related}
Covariate shift adaptation is widely studied in machine learning and statistics, primarily motivated by addressing model misspecification. A pivotal work is by \citet{Shimodaira2000}, as discussed in our Section~\ref{sec:asymptotic_property}. They advance covariate shift adaptation by introducing an importance weighting method for parametric models under model misspecification. Building upon this, \citet{sugiyama2008direct} proposes a direct density ratio estimation method for covariate shift adaptation, drawing from \citet{Shimodaira2000}'s formulation.

While \citet{Reddi2015} also contemplates DR covariate shift adaptation, their approach diverges from ours. Although they also refer to their method as the DR method, leading to potential naming overlaps, our estimator is traditionally dubbed the DR estimator within semiparametric inference literature. Thus, to maintain consistency, we also refer to our estimator as the DR estimator despite the risk of confusion.

We introduce applications of covariate shift adaptation. \citet{Sugiyama2006} and \citet{Sugiyama2009} leverage covariate shift adaptation combined with importance weighting to address active learning in the presence of model misspecification. \citet{Sugiyama2005} suggests a model selection technique under such conditions. Furthermore, \citet{Yamada2010} distinguishes speakers under a covariate shift, \citet{matsui2018variable} utilizes covariate shift adaptation for medical datasets, \citet{KatoUehara2020} executes off-policy evaluation, and \citet{Kato2020nnbr} classify text data. \citet{ramchandran2021ensembling} discusses appropriate models under a covariate shift. Batch-normalization can be justified as a covariate shift adaptation \citet{Sergey2015}. For the benchmark datasets, see \citet{koh2021wilds}.
For the inference, \citet{Tibshirani2019} proposed conformal inference under a covariate shift. 

Various approaches of direct density ratio estimation have been explored without undergoing density estimation \citep{Sugiyama2012}. These approaches include the moment matching \citep{Huang2007,Gretton2009}, the probabilistic classification \citep{Qin1998,Cheng2004}, the density matching \citep{Nguyen2010}, and the density-ratio fitting \citep{Kanamori2009}. \citet{Sugiyama2011bregman} shows that these methods could be generalized as density ratio matching under the Bregman divergence \citep{Bregman1967}. Recently, to estimate the density ratio with high-dimensional models, \citet{Rhodes2020} proposes the telescoping density-ratio estimation, and \citet{Kato2020nnbr} proposes using the non-negative Bregman divergence.

\section{Experiments}
\label{sec:exp}
This section provides simulation studies to verify the soundness of our proposed method. We compared our method (DR (CF)) with the OLS, weighted least squares for covariate shift adaptation with importance weighting and estimated density ratio by uLSIF \citep{Kanamori2009} (WLS), and nonparametric regression by the Kernel ridge regression (NP). 

For $X$, let $d = 3$, and its first element be the constant term. Let $\bm{\beta} = (\beta_0\ \beta_1\ \cdots \ \beta_k)$ be parameters, and $\bm{\beta}_0 = (\beta_{0, 0}\ \beta_{1, 0}\ \cdots \ \beta_{k, 0})$ be the true parameters.
For $x = (1\ x_1\ x_2)^\top$, we conduct simulation studies for the following two regression functions:
\begin{align*}
    &\mathrm{Model}~1:\ f_0(x) = \beta_{0, 0} + \beta_{1, 0}x_1 + \beta_{2, 0}x^2_1 + \beta_{3, 0}x_2\\
    &\ \ \ \ \ \ \ \ \ \ \ \ \ \ \ \ \ \ \ \ \ \ \ \ \ \ \ \ \ \ \ \ \ \ + \beta_{4, 0}x^2_{2, 0}+ 2\beta_{5, 0} x_1 x_2,\\
    &\mathrm{Model}~2:\ f_0(x) = \frac{1}{1 + \exp\left(\beta_{0, 0} + \beta_{1, 0}x_1 + \beta_{2, 0}x_2\right)}.
\end{align*}
In Model~1, we use $\beta_{0, 0} = \beta_{1, 0} = \cdots = \beta_{5, 0}$ and $\varepsilon^d$ follows the standard normal distribution. In Model~2, we use  $(\beta_{0, 0}, \beta_{1, 0}, \beta_{2, 0}) = (0, 2, 3)$, and $Y_i$ is generated from a Bernoulli distribution with probability $f_0(X_i)$. 

For each $f_0$ in the OLS, WLS, and DR, we train misspecified (Misspecified) and correctly specified models (Correct). For misspecified models, we use  
$
    g_{\bm{\beta}} = \beta_0 + \beta_1x_1 + \beta_2x_2.
$
For correctly specified models, we use the same linear models as $f_0$. 

For $X_i = (X_{1, i}\ X_{2, i})^\top$ and  $\widetilde{X}_j = (\widetilde{X}_{1, j}\ \widetilde{X}_{2, j})^\top$, we generate $(X_{1, i}\ X_{2, i})^\top$ and $(\widetilde{X}_{1, j}\ \widetilde{X}_{2, j})^\top$ from two-dimensional normal distributions $\mathcal{N}(\bm{\theta}, \Omega)$ and $\mathcal{N}(\widetilde{\bm{\theta}}, \Omega)$, respectively. We use two $\Omega$: $\Omega = \begin{pmatrix}
    1 & 0\\
    0 & 1
\end{pmatrix}$ and $\Omega = \begin{pmatrix}
    1 & 0.1\\
    0.1 & 1
\end{pmatrix}$, referred to as ``Indep $X$'' and ``Corr $X$'' in Table~\ref{tbl:exp_res}.  Here, $\bm{\theta}$ and $\widetilde{\bm{\theta}}$ are generated from a uniform distribution with a support $[-1, 1]$, respectively. We obtain $1,000$ observations and $500$ observations for train and test data. We report the mean squared errors (MSEs) and the standard deviation (parenthesis) in Table~\ref{tbl:exp_res}. More detailed results and additional experiments are shown in Appendix~\ref{appdx:exp}, including experiments with the SDB estimators.

Our DR outperforms the OLS and the WLS under misspecified models. However, under correctly specified models, the performance is lower than theirs. This deterioration is caused by additional errors by using cross-fitting. Compared to the OLS, WLS, and DML, the NP achieves the lower MSEs, as expected, although we focus on cases where the NP cannot be used (we need to use some restricted models for $\mathcal{G}$). This is because the NP uses a large class for the regression models. Combining Table~\ref{tbl:exp_res} with the results in Appendix~\ref{appdx:exp}, we consider that the improvement under misspecified models is mainly due to the existence of nonparametric models in the DR estimator. 

\section{Conclusion}
\label{sec:concl}
In this study, we proposed the DR covariate shift adaptation with importance weighting. Our method is robust to the inconsistency of the density ratio estimation and reduces bias incurred from the estimation error of the density ratio. We show the double robustness and the asymptotic distribution of our proposed estimator. Finally, we confirmed the soundness of our proposed method by simulation studies.

\bibliography{predictiontest}
\bibliographystyle{BoBWCSA.bbl}

\clearpage

\appendix

\onecolumn

\section{Related Work}
\label{appdx:related}

Our technique shares similarities with \citet{KatoUehara2020}, which employs DML in covariate shift adaptation for policy evaluation and learning. Specifically, we employ DML for covariate shift adaptation with a general parameterized regression function.
\citet{chernozhukov2023automatic} also applies DML for covariate shift adaptation, independently of our work. Although the idea of using DML for this task is similar to ours and \citet{KatoUehara2020}, their research diverges from them in that they focus on a linear regression model and automatic debiased learning by \citet{Chernozhukov2022}, without bypassing the density-ratio estimation. They also derive a different asymptotic distribution from ours, and it is an open issue what yields the difference and what implication each result has. We also address their open issue for applying the DML to classification by showing applicability to the logistic regression in Remark~\ref{rem:logit}. 

Importance weighting is widely used in statistical methods. \citet{Shimodaira2000} uses it for covariate shift adaptation. 
\citet{Cortes2010} also discusses the use of importance weighting under statistical learning. In causal inference, methods based on importance-weighting are known as inverse-probability weighting estimators \citet{Horowitz1994}, and \citet{KatoUehara2020} generalizes the causal inference methods by adapting a covariate shift.  

\citet{Kalan2020}, \citet{Zhang2022}, and \citet{Lei2021} investigate the optimality of the use of parametric models under a covariate shift.  
\citet{anonymous2023maximum} confirms the relationship between the problems of model specification and covariate shift adaptation. 

By using nonparametric models, we can adapt a covariate shift. This idea has been traditionally used in the context of model specification tests \citep{WOOLDRIDGE1990331}. 
\citet{Kpotufe2021} studies the use of nonparametric models. 
\citet{Ma2023} shows the minimax optimality of the kernel ridge regression under a covariate shift. \citet{Pathak2022} investigates a similarity measure for covariate shift adaptation using neural networks. \citet{schmidthieber2022local} proves a certain uniform optimality of nonparametric estimators based on neural networks. 

Although we focused on importance weighting, there are other approaches for covariate shift adaptation, including pseudo-labeling 
\citep{wang2023pseudolabeling} and node-based Bayesian neural networks \citep{Trinh2022}. \citet{BenDavid2006}, \citet{BenDavid2010}, \citet{Zhao2019OnLI}, and \citet{ruan2022optimal} investigates optimal representation for domain generalization and covariate shift adaptation.

\section{DR Covariate Shift Adaptation}
\label{appdx:pseudo}
This section provides a pseudo-code for our proposed DR covariate shift adaptation and a brief introduction of density-ratio estimation. 

\subsection{Pseudo-code for the DR Covariate Shift Adaptation}
While we introduced the DR covariate shift adaptation with $2$-fold cross-fitting in Section~\ref{sec:doublyrobust}, this section provides general $\xi$-fold cross-fitting for some $\xi\in\mathbb{N}$ such that $2 \leq \xi < \infty$. 

\begin{algorithm}[tb]
   \caption{\small{DR covariate shift adaptation}}
   \label{alg:dml}
\begin{algorithmic}
    \STATE \textbf{Input}: Train data $\{(Y_i, X_i)\}^{n}_{i=1}$ and test data $\{\widetilde{X}_j\}^{m}_{j=1}$. Parametric regression models $\mathcal{G}$. 
    \STATE Take a $\xi$-fold random partition  and $\left\{D_{(\ell)}\right\}^{\xi}_{\ell = 1}$ such that $\widetilde{D}_{(\ell)} = \{(Y_{(i, \ell)}, X_{(i, \ell)})\}^{n_{(\ell)}}_{i=1}$ and the size of each fold $\widetilde{D}_{(\ell)}$ is $n_{\xi} = n_{(\ell)}$.
    \STATE Take a $\xi$-fold random partition $\left\{\widetilde{D}_{(\ell)}\right\}^{\xi}_{\ell = 1}$ such that $\widetilde{D}_{(\ell)} = \{\widetilde{X}_{(j, \ell)}\}^{m_{(\ell)}}_{j=1}$ and the size of each fold $\widetilde{D}_{(\ell)}$ is $m_{\xi} = m_{(\ell)}$.
    \STATE For each $\ell\in[\xi]$, define $D_{(-\ell)} := \left\{D_{(\ell)}\right\}_{\ell' \neq \ell}$ and $\widetilde{D}_{(-\ell)} =  \left\{\widetilde{D}_{(-\ell)}\right\}_{\ell' \neq \ell}$.
    \FOR{$\ell\in[\xi]$}
    \STATE Construct estimators $\widehat{r}_{(-\ell)}$ and $\widehat{f}_{(-\ell)}$ using $D_{(-\ell)}$ and $\widetilde{D}_{(-\ell)}$.
    \STATE Construct a risk estimator 
    \begin{align*}
    \widehat{\mathcal{R}}_{(\ell)}(\bm{\beta}) &:= \frac{1}{n_{(2)}}\sum^{n_{(\ell)}}_{i=1}\Big\{\left(Y_{(i, \ell)} - g_{\bm{\beta}}(X_{(i, \ell)}) \right)^2 - \left(\widehat{f}_{(-\ell)}(X_{(i, \ell)}) - g_{\bm{\beta}}(X_{(i, \ell)})\right)^2\Big\} \widehat{r}_{(-\ell)}(X_{(i, \ell)})\\
    &\ \ \ \ \ \ \ \ \ + \frac{1}{m_{(\ell)}}\sum^{m_{(\ell)}}_{j=1}\left\{\left(\widehat{f}_{(-\ell)}(\widetilde{X}_{(j, \ell)}) - g_{\bm{\beta}}(\widetilde{X}_{(j, \ell)})\right)^2\right\}.
    \end{align*}
    \ENDFOR
    \STATE Construct an estimator 
    \begin{align*}
    \widehat{\bm{\beta}}^{\mathrm{DR}} = \argmin_{\bm{\beta} \in \Theta}\Big\{\widehat{\mathcal{R}}_{(2)}(\bm{\beta})\Big\}_{\ell \in [\xi]}.
\end{align*}
\end{algorithmic}
\end{algorithm}

\subsection{Density Ratio Estimation}
This section provides the formulation of the uLSIF, a method for density-ratio estimation, as a reference.

Let $\mathcal{S}$ be a class of non-negative measurable functions $s:\mathcal{X}\to \mathbb{R}^+$. We consider minimizing the following squared error between $s$ and $r$:
\begin{align}
\label{dr}
&\mathbb{E}_{p}[(s(X) - r_0(X_i))^{2}] = \mathbb{E}_{p}[r^2(X)] - 2\mathbb{E}_{q}[s(\widetilde{X})] + \mathbb{E}_{p}[s^2(X)].
\end{align}
The first term of the last equation does not affect the result of minimization, and we can ignore the term, i.e., the density ratio is estimated through the following minimization problem: 
\begin{align*}
s^{*} &= \argmin_{s\in\mathcal{S}} \left\{\mathbb{E}_{p}[r^2(X)] - 2\mathbb{E}_{q}[s(X)] + \mathbb{E}_{p}[s^2(X)]\right\} = \argmin_{s\in\mathcal{S}} \left\{\frac{1}{2} \mathbb{E}_{p}[s^2(X)] - \mathbb{E}_{q}[s(\widetilde{X})]\right\}.
\end{align*}
By replacing the population values with sample approximations, we can estimate $r_0$ by using observations with least-squares. 

\section{Proof of Theorem~\ref{thm:asymp_dist}}
\label{appendixthm:asymp_dist}
Let $[b]_a$ be an $a$-th element of a vector $b$. 
To prove Theorem~\ref{thm:asymp_dist}, we use the following lemma. The proof is shown in Appendix~\ref{appdx:proof_lemma}.
\begin{lemma}
\label{lem:nuisance}
Under the same conditions in Theorem~\ref{thm:asymp_dist}, the following holds:
\begin{align}
\label{eq:target2}
    \sqrt{n}\Bigg[&\sum_{\ell\in\{1, 2\}}\left\{\frac{1}{n_{(\ell)}}\sum^{n_{(\ell)}}_{i=1}\left(Y_{(i, \ell)} - \widehat{f}_{(-\ell)}(X_{(i, \ell)})\right) \widehat{r}_{(-\ell)}(X_{(i, \ell)})\dot{g}_{\bm{\beta}_0}(X_{(i, \ell)})\right.\nonumber \\
    &\left. + \frac{1}{m_{(\ell)}}\sum^{m_{(\ell)}}_{j=1}\left(\widehat{f}_{(-\ell)}(\widetilde{X}_{(j, \ell)}) - g_{\bm{\beta}_0}(\widetilde{X}_{(j, \ell)})\right)\dot{g}_{\bm{\beta}_0}(\widetilde{X}_{(j, \ell)})\right\}\nonumber\\
    &\ \ \ - \sum_{\ell\in\{1, 2\}}\left\{\frac{1}{n_{(\ell)}}\sum^{n_{(\ell)}}_{i=1}\left(Y_{(i, \ell)} - f_0(X_{(i, \ell)})\right) r_0(X_{(i, \ell)})\dot{g}_{\bm{\beta}_0}(X_{(i, \ell)})\right\}\Bigg]_a = o_p(1).
\end{align}
\end{lemma}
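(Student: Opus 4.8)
This is the standard ``Neyman-orthogonality plus cross-fitting'' reduction at the heart of DML, so I would prove it by isolating the two nuisance errors and showing that, after centering, each contributes only $o_p(n^{-1/2})$. First write $Y_{(i,\ell)} - \widehat{f}_{(-\ell)}(X_{(i,\ell)}) = \varepsilon_{(i,\ell)} + \big(f_0 - \widehat{f}_{(-\ell)}\big)(X_{(i,\ell)})$, and recall that under $f_0 \in \mathcal{G}$ we have $g_{\bm{\beta}_0} = f_0$, so the subtracted ``true-nuisance'' term keeps only its train-data part while the test-data average becomes $B^{(\ell)} := \frac{1}{m_{(\ell)}}\sum_j \big(\widehat{f}_{(-\ell)} - f_0\big)(\widetilde{X}_{(j,\ell)})\dot{g}_{\bm{\beta}_0}(\widetilde{X}_{(j,\ell)})$. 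The per-fold difference then splits into
\begin{align*}
T_1^{(\ell)} &:= \frac{1}{n_{(\ell)}}\sum_{i}\varepsilon_{(i,\ell)}\big(\widehat{r}_{(-\ell)} - r_0\big)(X_{(i,\ell)})\dot{g}_{\bm{\beta}_0}(X_{(i,\ell)}),\\
T_2^{(\ell)} &:= \frac{1}{n_{(\ell)}}\sum_{i}\big(f_0 - \widehat{f}_{(-\ell)}\big)(X_{(i,\ell)})\widehat{r}_{(-\ell)}(X_{(i,\ell)})\dot{g}_{\bm{\beta}_0}(X_{(i,\ell)}),
\end{align*}
together with $B^{(\ell)}$; I would bound $\sqrt{n}\,(T_1^{(\ell)} + T_2^{(\ell)} + B^{(\ell)})$ for each of the finitely many folds and each coordinate $a$.

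\textbf{Key step.} The crucial cancellation comes from pairing $T_2^{(\ell)}$ with $B^{(\ell)}$. Conditioning on the auxiliary fold $D_{(-\ell)}, \widetilde{D}_{(-\ell)}$, so that $\widehat{f}_{(-\ell)}, \widehat{r}_{(-\ell)}$ are frozen, the change-of-measure identity $\mathbb{E}_q[h(\widetilde{X})] = \mathbb{E}_p[r_0(X)h(X)]$ gives
\begin{align*}
\mathbb{E}\big[T_2^{(\ell)} + B^{(\ell)} \mid \text{aux}\big]
&= \mathbb{E}_p\Big[\big(f_0 - \widehat{f}_{(-\ell)}\big)\widehat{r}_{(-\ell)}\dot{g}_{\bm{\beta}_0}\Big] + \mathbb{E}_p\Big[r_0\big(\widehat{f}_{(-\ell)} - f_0\big)\dot{g}_{\bm{\beta}_0}\Big]\\
&= \mathbb{E}_p\Big[\big(f_0 - \widehat{f}_{(-\ell)}\big)\big(\widehat{r}_{(-\ell)} - r_0\big)\dot{g}_{\bm{\beta}_0}\Big].
\end{align*}
This is where double robustness pays off: the first-order nuisance terms cancel and only the \emph{product} of the two errors survives. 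By Cauchy--Schwarz and boundedness of $\dot{g}_{\bm{\beta}_0}$ (Assumption~\ref{asm:bounded_output}), this conditional mean is $O_p\big(\|\widehat{f}_{(-\ell)} - f_0\|_2\,\|\widehat{r}_{(-\ell)} - r_0\|_2\big) = o_p\big(1/\sqrt{\min\{n,m\}}\big)$ by Assumption~\ref{asm:nuisance_conv}, so $\sqrt{n}$ times it is $o_p(1)$ since $m = \rho n$.

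\textbf{Remaining terms and obstacle.} The centered parts $T_2^{(\ell)} - \mathbb{E}[T_2^{(\ell)}\mid\text{aux}]$ and $B^{(\ell)} - \mathbb{E}[B^{(\ell)}\mid\text{aux}]$ are, conditionally on the auxiliary fold, averages of i.i.d. mean-zero summands; their conditional variances are bounded by $C\|\widehat{f}_{(-\ell)} - f_0\|_2^2/n_{(\ell)}$ (using that $\widehat{r}_{(-\ell)}$ is bounded via Assumption~\ref{asm:bounded_dens_ratio}), so conditional Chebyshev yields $\sqrt{n}$ times each $= O_p(\|\widehat{f}_{(-\ell)} - f_0\|_2) = o_p(1)$. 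The term $T_1^{(\ell)}$ is handled identically: $\mathbb{E}[\varepsilon\mid X] = 0$ makes it conditionally mean-zero with conditional variance $O(\|\widehat{r}_{(-\ell)} - r_0\|_2^2/n_{(\ell)})$, hence $\sqrt{n}\,T_1^{(\ell)} = O_p(\|\widehat{r}_{(-\ell)} - r_0\|_2) = o_p(1)$. Summing over the two folds and reading off coordinate $a$ completes the argument. The main obstacle is conceptual rather than computational: one must (i) spot the exact pairing of the train-data regression-error term with the test-data term that produces the orthogonal product above, and (ii) invoke cross-fitting carefully so that the nuisances may be treated as deterministic under the conditioning --- this is precisely what lets the empirical-process remainders be controlled by conditional second-moment bounds rather than a Donsker condition, the issue flagged just before Assumption~\ref{asm:nuisance_conv}.
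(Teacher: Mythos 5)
Your proof is correct and takes essentially the same route as the paper's: both condition on the auxiliary fold so the nuisances are frozen, control the centered empirical-process fluctuations via conditional variance bounds plus Chebyshev (with the bounded convergence theorem to remove the conditioning), and kill the conditional-mean term through the change-of-measure cancellation $\mathbb{E}_q[h(\widetilde{X})] = \mathbb{E}_p[r_0(X)h(X)]$, which leaves exactly the product $\mathbb{E}_p\big[(f_0-\widehat{f}_{(-\ell)})(\widehat{r}_{(-\ell)}-r_0)\dot{g}_{\bm{\beta}_0}\big]$ that Cauchy--Schwarz and Assumption~\ref{asm:nuisance_conv} make $o_p(n^{-1/2})$. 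The only difference is bookkeeping: you substitute $Y = f_0(X)+\varepsilon$ and $g_{\bm{\beta}_0}=f_0$ at the outset to obtain the explicit terms $T_1^{(\ell)}, T_2^{(\ell)}, B^{(\ell)}$, whereas the paper keeps the scores $\phi_1,\phi_2$ abstract and performs the identical cancellation inside the expectations.
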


Then, we show Theorem~\ref{thm:asymp_dist}.
\begin{proof}[Proof of Theorem~\ref{thm:asymp_dist}]
For the empirical risk, the first order condition of the optimal regression model is given as
\begin{align*}
    &\sum_{\ell\in\{1, 2\}}\Bigg\{\frac{1}{n_{(\ell)}}\sum^{n_{(\ell)}}_{i=1}\left(Y_{(i, \ell)} - \widehat{f}_{(-\ell)}(X_{(i, \ell)})\right) \widehat{r}_{(-\ell)}(X_{(i, \ell)})\dot{g}_{\widehat{\bm{\beta}}^{\mathrm{DR}}}(X_{(i, \ell)})\\
    &\ \ \ \ \ \ \ \ \ \ \ \ \ \ \ \ \ \ \ \ \ \ \ \ \ \ \ \ \ \ \ \ \ \ \ \ \ \ \ \ \ + \frac{1}{m_{(\ell)}}\sum^{m_{(\ell)}}_{j=1}\left(\widehat{f}_{(-\ell)}(\widetilde{X}_{(j, \ell)}) - g_{\widehat{\bm{\beta}}^{\mathrm{DR}}}(\widetilde{X}_{(j, \ell)})\right)\dot{g}_{\widehat{\bm{\beta}}^{\mathrm{DR}}}(\widetilde{X}_{(j, \ell)})\Bigg\} = \bm{0}.
\end{align*}

From the Taylor expansion, it holds that
\begin{align*}
    &\sum_{\ell\in\{1, 2\}}\left\{\frac{1}{n_{(\ell)}}\sum^{n_{(\ell)}}_{i=1}\left(Y_{(i, \ell)} - \widehat{f}_{(-\ell)}(X_{(i, \ell)})\right) \widehat{r}_{(-\ell)}(X_{(i, \ell)})\dot{g}_{\widehat{\bm{\beta}}^{\mathrm{DR}}}(X_{(i, \ell)}) \right. \\
    &~~~~~~~~~~ \left. + \frac{1}{m_{(\ell)}}\sum^{m_{(\ell)}}_{j=1}\left(\widehat{f}_{(-\ell)}(\widetilde{X}_{(j, \ell)}) - g_{\widehat{\bm{\beta}}^{\mathrm{DR}}}(\widetilde{X}_{(j, \ell)})\right)\dot{g}_{\widehat{\bm{\beta}}^{\mathrm{DR}}}(\widetilde{X}_{(j, \ell)})\right\}\\
    =&\sum_{\ell\in\{1, 2\}}\left\{\frac{1}{n_{(\ell)}}\sum^{n_{(\ell)}}_{i=1}\left(Y_{(i, \ell)} - \widehat{f}_{(-\ell)}(X_i)\right) \widehat{r}_{(-\ell)}(X_i)\dot{g}_{\bm{\beta}_0}(X_{(i, \ell)}) \right. \\ 
    &~~~~~~~~~~ \left. + \frac{1}{m_{(\ell)}}\sum^{m_{(\ell)}}_{j=1}\left(\widehat{f}_{(-\ell)}(\widetilde{X}_{(j, \ell)}) - g_{{\bm{\beta}}_0}(\widetilde{X}_{(j, \ell)})\right)\dot{g}_{{\bm{\beta}}_0}(\widetilde{X}_{(j, \ell)})\right\}\\
    &\ \ \ \ \ \ + \left({\bm{\beta}}_0 - \widehat{\bm{\beta}}^{\mathrm{DR}}\right)\sum_{\ell\in\{1, 2\}}\Bigg\{ \frac{1}{n_{(\ell)}}\sum^{n_{(\ell)}}_{i=1}\left(Y_{(i, \ell)} - \widehat{f}_{(-\ell)}(X_{(i, \ell)})\right) \widehat{r}_{(-\ell)}(X_{(i, \ell)})\ddot{g}_{{\bm{\beta}}_0}(X_{(i, \ell)})\\
    &~~~~~~~~~~ - \frac{1}{m_{(\ell)}}\sum^{m_{(\ell)}}_{j=1}\left\{ \dot{g}_{{\bm{\beta}}_0}(\widetilde{X}_{(j, \ell)})\dot{g}_{{\bm{\beta}}_0}(\widetilde{X}_{(j, \ell)}) + \left(\widehat{f}_{(-\ell)}(\widetilde{X}_j) - g_{{\bm{\beta}}_0}(\widetilde{X}_{(j, \ell)})\right)\ddot{g}_{{\bm{\beta}}_0}(\widetilde{X}_{(j, \ell)}) \right\} \Bigg\}. 
\end{align*}
Then, we have
\begin{align*}
    &\sqrt{n}\left({\bm{\beta}}_0 - \widehat{\bm{\beta}}^{\mathrm{DR}}\right) = \sqrt{n}AB,
\end{align*}
where
\begin{align*}
    A &= - \Bigg(\sum_{\ell\in\{1, 2\}}\Bigg\{ \frac{1}{n_{(\ell)}}\sum^{n_{(\ell)}}_{i=1}\left(Y_{(i, \ell)} - \widehat{f}_{(-\ell)}(X_{(i, \ell)})\right) \widehat{r}_{(-\ell)}(X_{(i, \ell)})\ddot{g}_{{\bm{\beta}}_0}(X_{(i, \ell)})\\
    &~~~~~~~~~~ - \frac{1}{m_{(\ell)}}\sum^{m_{(\ell)}}_{j=1}\left\{ \dot{g}_{{\bm{\beta}}_0}(\widetilde{X}_j)\dot{g}_{{\bm{\beta}}_0}(\widetilde{X}_{(j, \ell)}) + \left(\widehat{f}_{(-\ell)}(\widetilde{X}_{(j, \ell)}) - g_{{\bm{\beta}}_0}(\widetilde{X}_{(j, \ell)})\right)\ddot{g}_{{\bm{\beta}}_0}(\widetilde{X}_{(j, \ell)}) \right\} \Bigg\}\Bigg)^{-1},\\
    B &= \sum_{\ell\in\{1, 2\}}\left\{\frac{1}{n_{(\ell)}}\sum^{n_{(\ell)}}_{i=1}\left(Y_{(i, \ell)} - \widehat{f}_{(-\ell)}(X_{(i, \ell)})\right) \widehat{r}_{(-\ell)}(X_{(i, \ell)})\dot{g}_{{\bm{\beta}}_0}(X_{(i, \ell)}) \right. \\ 
    &~~~~~~~~~~ \left. + \frac{1}{m_{(\ell)}}\sum^{m_{(\ell)}}_{j=1}\left(\widehat{f}_{(-\ell)}(\widetilde{X}_{(j, \ell)}) - g_{{\bm{\beta}}_0}(\widetilde{X}_{(j, \ell)})\right)\dot{g}_{{\bm{\beta}}_0}(\widetilde{X}_{(j, \ell)})\right\}.
\end{align*}

In the following parts, we show
\begin{align}
\label{eq:proof_target1}
    A\xrightarrow{\mathrm{p}} -\mathbb{E}_q\left[\dot{g}^\top_{{\bm{\beta}}_0}(\widetilde{X})\dot{g}_{{\bm{\beta}}_0}(\widetilde{X})\right]^{-1},
\end{align}
and
\begin{align}
\label{eq:proof_target2}
    \sqrt{n}B \xrightarrow{\mathrm{d}}\mathcal{N}\left(0, \mathbb{E}_q\left[\sigma^2(\widetilde{X})r_0(\widetilde{X})\Big\{\dot{g}_{\bm{\beta}_0}(\widetilde{X})\dot{g}^\top_{\bm{\beta}_0}(\widetilde{X})\Big\}\right]\right), 
\end{align}
separately.

\paragraph{Proof of \eqref{eq:proof_target1}.}
Because $\hat{f}_{(-\ell)}\xrightarrow{\mathrm{p}} f_0$ as $n_{(-\ell)} \to \infty$, it holds that
\[A^{-1}\xrightarrow{\mathrm{p}} -\mathbb{E}\left[\dot{g}^\top_{{\bm{\beta}}_0}(\widetilde{X})\dot{g}_{{\bm{\beta}}_0}(\widetilde{X})\right]\] 
as $m_{(1)} \to \infty$. Therefore, if $\mathbb{E}\left[\dot{g}^\top_{{\bm{\beta}}_0}(\widetilde{X})\dot{g}_{{\bm{\beta}}_0}(\widetilde{X})\right]^{-1}$ exists, then we have
\[A\xrightarrow{\mathrm{p}} -\mathbb{E}_q\left[\dot{g}^\top_{{\bm{\beta}}_0}(\widetilde{X})\dot{g}_{{\bm{\beta}}_0}(\widetilde{X})\right]^{-1}\] 
as $m_{(1)} \to \infty$.

\paragraph{Proof of \eqref{eq:proof_target2}.}
Next, we show \eqref{eq:proof_target2}. To show \eqref{eq:proof_target2}, we consider the following decomposition:
\begin{align*}
    &B = \sum_{\ell\in\{1, 2\}}\left\{\frac{1}{n_{(\ell)}}\sum^{n_{(\ell)}}_{i=1}\left(Y_{(i, \ell)} - \widehat{f}_{(-\ell)}(X_{(i, \ell)})\right) \widehat{r}_{(-\ell)}(X_{(i, \ell)})\dot{g}_{\bm{\beta}_0}(X_{(i, \ell)}) \right. \\
    & ~~~~~~~~~~\left. + \frac{1}{m_{(\ell)}}\sum^{m_{(\ell)}}_{j=1}\left(\widehat{f}_{(-\ell)}(\widetilde{X}_{(j, \ell)}) - g_{\bm{\beta}_0}(\widetilde{X}_{(j, \ell)})\right)\dot{g}_{\bm{\beta}_0}(\widetilde{X}_{(j, \ell)})\right\}\\
    &=\sum_{\ell\in\{1, 2\}}\left\{\frac{1}{n_{(\ell)}}\sum^{n_{(\ell)}}_{i=1}\left(Y_{(i, \ell)} - \widehat{f}_{(-\ell)}(X_{(i, \ell)})\right) \widehat{r}_{(-\ell)}(X_{(i, \ell)})\dot{g}_{\bm{\beta}_0}(X_{(i, \ell)}) \right. \\ 
    & ~~~~~~~~~~\left. + \frac{1}{m_{(\ell)}}\sum^{m_{(\ell)}}_{j=1}\left(\widehat{f}_{(-\ell)}(\widetilde{X}_{(j, \ell)}) - g_{\bm{\beta}_0}(\widetilde{X}_{(j, \ell)})\right)\dot{g}_{\bm{\beta}_0}(\widetilde{X}_{(j, \ell)})\right\}\\
    & ~~~~~ - \sum_{\ell\in\{1, 2\}}\left\{\frac{1}{n_{(\ell)}}\sum^{n_{(\ell)}}_{i=1}\left(Y_{(i, \ell)} - f_0(X_{(i, \ell)})\right) r_0(X_{(i, \ell)})\dot{g}_{\bm{\beta}_0}(X_{(i, \ell)}) \right\} \\ 
    & ~~~~~ + \sum_{\ell\in\{1, 2\}}\left\{\frac{1}{n_{(\ell)}}\sum^{n_{(\ell)}}_{i=1}\left(Y_{(i, \ell)} - f_0(X_{(i, \ell)})\right) r_0(X_{(i, \ell)})\dot{g}_{\bm{\beta}_0}(X_{(i, \ell)})\right\}.
\end{align*}

From Lemma~\ref{lem:nuisance}, it holds that
\begin{align}
\sqrt{n}\Bigg[&\sum_{\ell\in\{1, 2\}}\left\{\frac{1}{n_{(\ell)}}\sum^{n_{(\ell)}}_{i=1}\left(Y_{(i, \ell)} - \widehat{f}_{(-\ell)}(X_{(i, \ell)})\right) \widehat{r}_{(-\ell)}(X_{(i, \ell)})\dot{g}_{\bm{\beta}_0}(X_{(i, \ell)}) \right. \nonumber \\ 
& ~~~~~~~~~~ \left. + \frac{1}{m_{(\ell)}}\sum^{m_{(\ell)}}_{j=1}\left(\widehat{f}_{(-\ell)}(\widetilde{X}_{(j, \ell)}) - g_{\bm{\beta}_0}(\widetilde{X}_{(j, \ell)})\right)\dot{g}_{\bm{\beta}_0}(\widetilde{X}_{(j, \ell)})\right\}\nonumber\\
    &\ \ \ - \sum_{\ell\in\{1, 2\}}\left\{\frac{1}{n_{(\ell)}}\sum^{n_{(\ell)}}_{i=1}\left(Y_{(i, \ell)} - f_0(X_{(i, \ell)})\right) r_0(X_i)\dot{g}_{\bm{\beta}_0}(X_{(i, \ell)})\right\}\Bigg]_a\nonumber\\
&=o_p(1).
\end{align}
Therefore, we obtain
\begin{align*}
    &\sqrt{n}B = \sqrt{n}\sum_{\ell\in\{1, 2\}}\frac{1}{n_{(\ell)}}\sum^{n_{(\ell)}}_{i=1}\left(Y_{(i, \ell)} - f_0(X_{(i, \ell)})\right) r_0(X_{(i, \ell)})\dot{g}_{\bm{\beta}_0}(X_{(i, \ell)}) + o_p(1),
\end{align*}
which immediately yields \eqref{eq:proof_target2}.
\end{proof}

\section{Proof of Lemma~\ref{lem:nuisance}}
\label{appdx:proof_lemma}
To show \eqref{eq:target2}, we use the following decomposition:
\begin{align*}
    &\sqrt{n}\left\{\frac{1}{n_{(\ell)}}\sum^{n_{(\ell)}}_{i=1}\left(Y_{(i, \ell)} - \widehat{f}_{(-\ell)}(X_{(i, \ell)})\right) \widehat{r}_{(-\ell)}(X_{(i, \ell)})\dot{g}_{\bm{\beta}_0}(X_{(i, \ell)}) \right. \\ 
    & ~~~~~\left. + \frac{1}{m_{(\ell)}}\sum^{m_{(\ell)}}_{j=1}\left(\widehat{f}_{(-\ell)}(\widetilde{X}_{(j, \ell)}) - g_{\bm{\beta}_0}(\widetilde{X}_{(j, \ell)})\right)\dot{g}_{\bm{\beta}_0}(\widetilde{X}_{(j, \ell)})\right\}
    - \left\{\frac{1}{n_{(\ell)}}\sum^{n_{(\ell)}}_{i=1}\left(Y_{(i, \ell)} - f_0(X_{(i, \ell)})\right) r_0(X_i)\dot{g}_{\bm{\beta}_0}(X_{(i, \ell)})\right\}\\
    &= \sqrt{n}\left\{\frac{1}{n_{(\ell)}}\sum^{n_{(\ell)}}_{i=1}\left(Y_{(i, \ell)} - \widehat{f}_{(-\ell)}(X_{(i, \ell)})\right) \widehat{r}_{(-\ell)}(X_{(i, \ell)})\dot{g}_{\bm{\beta}_0}(X_{(i, \ell)}) + \frac{1}{m_{(\ell)}}\sum^{m_{(\ell)}}_{j=1}\widehat{f}_{(-\ell)}(\widetilde{X}_{(j, \ell)}\dot{g}_{\bm{\beta}_0}(\widetilde{X}_{(j, \ell)})\right\}\\
    & ~~~~~ - \left\{\frac{1}{n_{(\ell)}}\sum^{n_{(\ell)}}_{i=1}\left(Y_{(i, \ell)} - f_0(X_{(i, \ell)})\right) r_0(X_{(i, \ell)})\dot{g}_{\bm{\beta}_0}(X_{(i, \ell)}) + \frac{1}{m_{(\ell)}}\sum^{m_{(\ell)}}_{j=1}g_{\bm{\beta}_0}(\widetilde{X}_{(j, \ell)})\dot{g}_{\bm{\beta}_0}(\widetilde{X}_{(j, \ell)})\right\}\\
    &= \sqrt{n}\left\{\frac{1}{n_{(\ell)}}\sum^{n_{(\ell)}}_{i=1}\left(Y_{(i, \ell)} - \widehat{f}_{(-\ell)}(X_{(i, \ell)})\right) \widehat{r}_{(-\ell)}(X_{(i, \ell)})\dot{g}_{\bm{\beta}_0}(X_{(i, \ell)}) + \frac{1}{m_{(\ell)}}\sum^{m_{(\ell)}}_{j=1}\widehat{f}_{(-\ell)}(\widetilde{X}_{(j, \ell)})\dot{g}_{\bm{\beta}_0}(\widetilde{X}_{(j, \ell)})\right\}\\
    & ~~~~~ - \left\{\frac{1}{n_{(\ell)}}\sum^{n_{(\ell)}}_{i=1}\left(Y_{(i, \ell)} - f_0(X_{(i, \ell)})\right) r_0(X_{(i, \ell)})\dot{g}_{\bm{\beta}_0}(X_{(i, \ell)}) + \frac{1}{m_{(\ell)}}\sum^{m_{(\ell)}}_{j=1}g_{\bm{\beta}_0}(\widetilde{X}_{(j, \ell)})\dot{g}_{\bm{\beta}_0}(\widetilde{X}_{(j, \ell)})\right\}\\
    & ~~~~~ - \sqrt{n}\mathbb{E}\Bigg[\left\{\frac{1}{n_{(\ell)}}\sum^{n_{(\ell)}}_{i=1}\left(Y_{(i, \ell)} - \widehat{f}_{(-\ell)}(X_{(i, \ell)})\right) \widehat{r}_{(-\ell)}(X_{(i, \ell)})\dot{g}_{\bm{\beta}_0}(X_{(i, \ell)}) + \frac{1}{m_{(\ell)}}\sum^{m_{(\ell)}}_{j=1}\widehat{f}_{(-\ell)}(\widetilde{X}_{(j, \ell)})\dot{g}_{\bm{\beta}_0}(\widetilde{X}_{(j, \ell)})\right\}\\
    & ~~~~~~~~~~ - \left\{\frac{1}{n_{(\ell)}}\sum^{n_{(\ell)}}_{i=1}\left(Y_{(i, \ell)} - f_0(X_{(i, \ell)})\right) r_0(X_{(i, \ell)})\dot{g}_{\bm{\beta}_0}(X_{(i, \ell)}) + \frac{1}{m_{(\ell)}}\sum^{m_{(\ell)}}_{j=1}g_{\bm{\beta}_0}(\widetilde{X}_{(j, \ell)})\dot{g}_{\bm{\beta}_0}(\widetilde{X}_{(j, \ell)})\right\}\Bigg]\\
    &~~~~~ + \sqrt{n}\mathbb{E}\Bigg[\left\{\frac{1}{n_{(\ell)}}\sum^{n_{(\ell)}}_{i=1}\left(Y_{(i, \ell)} - \widehat{f}_{(-\ell)}(X_{(i, \ell)})\right) \widehat{r}_{(-\ell)}(X_{(i, \ell)})\dot{g}_{\bm{\beta}_0}(X_{(i, \ell)}) + \frac{1}{m_{(\ell)}}\sum^{m_{(\ell)}}_{j=1}\widehat{f}_{(-\ell)}(\widetilde{X}_{(j, \ell)})\dot{g}_{\bm{\beta}_0}(\widetilde{X}_{(j, \ell)})\right\}\\
    &~~~~~~~~~~ - \left\{\frac{1}{n_{(\ell)}}\sum^{n_{(\ell)}}_{i=1}\left(Y_{(i, \ell)} - f_0(X_{(i, \ell)})\right) r_0(X_{(i, \ell)})\dot{g}_{\bm{\beta}_0}(X_{(i, \ell)}) + \frac{1}{m_{(\ell)}}\sum^{m_{(\ell)}}_{j=1}g_{\bm{\beta}_0}(\widetilde{X}_{(j, \ell)})\dot{g}_{\bm{\beta}_0}(\widetilde{X}_{(j, \ell)})\right\}\Bigg].
\end{align*}

For a measurable function $\psi:\mathcal{X}\times \mathcal{R}\to\mathbb{R}$ and $\psi':\mathcal{X}\to\mathbb{R}$, let us define the empirical expectations over the labeled and unlabeled data as
\begin{align*}
&\mathbb{E}_{n_{(\ell)}}[\psi(X_{(i, \ell)}, Y_{(i, \ell)})] := \frac{1}{n_{(\ell)}}\sum^{n_{(\ell)}}_{i=1}\psi(X_{(i, \ell)}, Y_{(i, \ell)}),\qquad \mathbb{E}_{m_{(\ell)}}[\psi(\widetilde{X}_{(j, \ell)})] := \frac{1}{m_{(\ell)}}\sum^{m_{(\ell)}}_{j=1}\psi'(\widetilde{X}_{(j, \ell)}).
\end{align*} 

We denote 
\begin{align*}
&\phi_1(x, y; f, r) := (y-f(x))\dot{g}_{\bm{\beta}_0}(x)r(x),\qquad \phi_2(x; f) := \dot{g}_{\bm{\beta}_0}(x)f(x).
\end{align*}

Then, in the following proof, we separately show the following two inequalities:
\begin{align}
\label{eq:target4}
&\frac{\sqrt{n}}{\sqrt{n_{(1)}}}\mathbb{G}_{n_{(1)}}\big(\phi_1(X_{(i, 1)}, Y_{(i, 1)}; \widehat{f}_{(-1)}, \widehat{r}_{(-1)}) - \phi_1(X_{(i, 1)}, Y_{(i, 1)}; f_0, r_0)\big)\nonumber\\
&\ \ \ \ \ \ \ \ \ \ \ \ \ \ \ \ \ \ \ \ \ \ \ \ \ \ \ \ \ \ \ \ \ \ \ \ \ \ \ \ \ +\frac{\sqrt{n}}{\sqrt{m_{(1)}}}\mathbb{G}_{m_{(1)}}\big(\phi_2(\widetilde{X}_{(j, 1)}; \widehat{f}_{(-1)}) - \phi_2(\widetilde{X}_{(j, 1)}; f_0)\big) = \mathrm{o}_p(1),
\end{align}
and 
\begin{align}
\label{eq:target5}
&\sqrt{n}\Bigg\{\mathbb{E}_p\Big[\phi_1(X_{(i, 1)}, Y_{(i, 1)}; \widehat{f}_{(-1)}, \widehat{r}_{(-1)})\mid \widehat{f}_{(-1)}, \widehat{r}_{(-1)}\Big] - \mathbb{E}_p\Big[\phi_1(X_{(i, 1)}, Y_{(i, 1)}; f_0, r_0)\mid \widehat{f}_{(-1)}, \widehat{r}_{(-1)}\Big]\nonumber\\
& ~~~~~ + \mathbb{E}_q\Big[\phi_2(\widetilde{X}_{(j, 1)}; \widehat{f}_{(-1)})\mid \widehat{f}_{(-1)}\Big]- \mathbb{E}_q\Big[\phi_2(\widetilde{X}_{(j, 1)}; f_0)\mid \widehat{f}_{(-1)}\Big]\Bigg\} 
= \mathrm{o}_p(1),
\end{align}
where $\mathbb{G}_{n_{(\ell)}}$ and $\widetilde{\mathbb{G}}_{m_{(\ell)}}$ are empirical processes defined as 
\begin{align*}
&\mathbb{G}_{n_{(\ell)}}\big(\phi_1(X_{(i, \ell)}, Y_{(i, \ell)}; \widehat{f}_{(-\ell)}, \widehat{r}_{(-\ell)}) - \phi_1(X_{(i, \ell)}, Y_{(i, \ell)}; f_0, r_0)\big)\\
&\ \ \ =\sqrt{n_{(\ell)}}\big(\mathbb{E}_{n}[\phi_1(X_{(i, \ell)}, Y_{(i, \ell)}; \widehat{f}_{(-\ell)}, \widehat{r}_{(-\ell)})] - \mathbb{E}_p[\phi_1(X_{(i, \ell)}, Y_{(i, \ell)}; f_0, r_0)\mid f_0, r_0]\big)\\
&\mathbb{G}_{m_{(\ell)}}\big(\phi_2(\widetilde{X}_{(j, \ell)}; \widehat{f}_{(-\ell)}) - \phi_2(\widetilde{X}_{(j, \ell)}; f_0)\big)\\
&\ \ \ =\sqrt{m_{(\ell)}}\big(\mathbb{E}_{m}[\phi_2(\widetilde{X}_{(j, \ell)}; \widehat{f}_{(-\ell)})] - \mathbb{E}_q[\phi_2(\widetilde{X}_{(j, \ell)}; f_0)\mid f_0]\big).
\end{align*}

\paragraph{Step~1: Proof of \eqref{eq:target4}.}

If we can show that for any $\epsilon>0$, 
\begin{align}\
\label{eq:part}
 \lim_{n,m\to \infty}\mathbb{P}\Bigg[&\Big|\frac{\sqrt{n}}{\sqrt{n_{(1)}}}\mathbb{G}_{n_{(1)}}\big(\phi_1(X_{(i, 1)}, Y_{(i, 1)}; \widehat{f}_{(-1)}, \widehat{r}_{(-1)}) - \phi_1(X_{(i, 1)}, Y_{(i, 1)}; f_0, r_0)\big)\nonumber\\
&\ \ \ +\frac{\sqrt{n}}{\sqrt{m_{(1)}}}\mathbb{G}_{m_{(1)}}\big(\phi_2(\widetilde{X}_{(j, 1)}; \widehat{f}_{(-1)}) - \phi_2(\widetilde{X}_{(j, 1)}; f_0)\big) \Big| > \varepsilon \mid D_{(2)} \Bigg]=0,
\end{align}
then by the bounded convergence theorem, we would have 
\begin{align*}
\lim_{n \to \infty}P\Bigg[&\Big|\frac{\sqrt{n}}{\sqrt{n_{(1)}}}\mathbb{G}_{n_{(1)}}\big(\phi_1(X_{(i, 1)}, Y_{(i, 1)}; \widehat{f}_{(-1)}, \widehat{r}_{(-1)}) - \phi_1(X_{(i, 1)}, Y_{(i, 1)}; f_0, r_0)\big)\nonumber\\
&\ \ \ +\frac{\sqrt{n}}{\sqrt{m_{(1)}}}\mathbb{G}_{m_{(1)}}\big(\phi_2(\widetilde{X}_{(j, 1)}; \widehat{f}_{(-1)}) - \phi_2(\widetilde{X}_{(j, 1)}; f_0)\big) \Big| > \varepsilon \mid D_{(2)} \Bigg]=0,
\end{align*}
yielding the statement. 

To show \eqref{eq:part}, we show that the conditional mean is $0$ and the conditional variance is $\op(1)$. Then, \eqref{eq:part} is proved by the Chebyshev inequality following the proof of \citep[Theorem 4]{KallusUehara2019}. 
The conditional mean is 
\begin{align*}
&\mathbb{E}\Bigg[\frac{\sqrt{n}}{\sqrt{n_{(1)}}}\mathbb{G}_{n_{(1)}}\big(\phi_1(X_{(i, 1)}, Y_{(i, 1)}; \widehat{f}_{(-1)}, \widehat{r}_{(-1)}) - \phi_1(X_{(i, 1)}, Y_{(i, 1)}; f_0, r_0)\big)\\
&\ \ \ \ \ +\frac{\sqrt{n}}{\sqrt{m_{(1)}}}\mathbb{G}_{m_{(1)}}\big(\phi_2(\widetilde{X}_{(j, 1)}; \widehat{f}_{(-1)}) - \phi_2(\widetilde{X}_{(j, 1)}; f_0)\big) \mid D_{(2)}, \widetilde{D}_{(2)}\Bigg] \\
&= \mathbb{E}\Bigg[\frac{\sqrt{n}}{\sqrt{n_{(1)}}}\mathbb{G}_{n_{(1)}}\big(\phi_1(X_{(i, 1)}, Y_{(i, 1)}; \widehat{f}_{(-1)}, \widehat{r}_{(-1)}) - \phi_1(X_{(i, 1)}, Y_{(i, 1)}; f_0, r_0)\big),\\
&\ \ \ \ \ +\frac{\sqrt{n}}{\sqrt{m_{(1)}}}\mathbb{G}_{m_{(1)}}\big(\phi_2(\widetilde{X}_{(j, 1)}; \widehat{f}_{(-1)}) - \phi_2(\widetilde{X}_{(j, 1)}; f_0)\big) \mid \widehat{f}_{(-1)}, \widehat{r}_{(-1)}\Bigg] \\
&=0. 
\end{align*}

The conditional variance is bounded as 
\begin{align*}
&\mathrm{Var}\Bigg[\frac{\sqrt{n}}{\sqrt{n_{(1)}}}\mathbb{G}_{n_{(1)}}\big(\phi_1(X_{(i, 1)}, Y_{(i, 1)}; \widehat{f}_{(-1)}, \widehat{r}_{(-1)}) - \phi_1(X_{(i, 1)}, Y_{(i, 1)}; f_0, r_0)\big),\\
&\ \ \ \ \ +\frac{\sqrt{n}}{\sqrt{m_{(1)}}}\mathbb{G}_{m_{(1)}}\big(\phi_2(\widetilde{X}_{(j, 1)}; \widehat{f}_{(-1)}) - \phi_2(\widetilde{X}_{(j, 1)}; f_0)\big) \mid D_{(2)}, \widetilde{D}_{(2)}\Bigg]\\
&=\frac{n}{n_{(1)}}\mathrm{Var}\Bigg[\phi_1(X_{(i, 1)}, Y_{(i, 1)}; \widehat{f}_{(-1)}, \widehat{r}_{(-1)}) - \phi_1(X_{(i, 1)}, Y_{(i, 1)}; f_0, r_0)\mid D_{(2)}, \widetilde{D}_{(2)}\Bigg]\\
&\ \ \ \ \ +\frac{n}{m_{(1)}}\mathrm{Var}\Bigg[\phi_2(\widetilde{X}_{(j, 1)}; \widehat{f}_{(-1)}) - \phi_2(\widetilde{X}_{(j, 1)}; f_0)\mid \widehat{f}_{(-1)}, \widehat{r}_{(-1)} \mid D_{(2)}, \widetilde{D}_{(2)}\Bigg]\\
&=\frac{n}{n_{(1)}}\mathbb{E}_p\Bigg[\left\{\phi_1(X_{(i, 1)}, Y_{(i, 1)}; \widehat{f}_{(-1)}, \widehat{r}_{(-1)}) - \phi_1(X_{(i, 1)}, Y_{(i, 1)}; f_0, r_0)\right\}^2 \mid D_{(2)}, \widetilde{D}_{(2)}\Bigg]\\
&\ \ \ \ \ +\frac{n}{m_{(1)}}\mathbb{E}_q\Bigg[\left\{\phi_2(\widetilde{X}_{(j, 1)}; \widehat{f}_{(-1)}) - \phi_2(\widetilde{X}_{(j, 1)}; f_0)\mid \widehat{f}_{(-1)}, \widehat{r}_{(-1)}\right\}^2  \mid D_{(2)}, \widetilde{D}_{(2)}\Bigg]\\
&=\mathrm{o}_p(1)
\end{align*}

Here, we used
\begin{align}
\label{eq:first_e}
    \frac{n}{n_{(1)}}\mathbb{E}_p\Bigg[\left\{\phi_1(X_{(i, 1)}, Y_{(i, 1)}; \widehat{f}_{(-1)}, \widehat{r}_{(-1)}) - \phi_1(X_{(i, 1)}, Y_{(i, 1)}; f_0, r_0)\right\}^2 \mid D_{(2)}, \widetilde{D}_{(2)}\Bigg]=\mathrm{o}_p(1),
\end{align}
and 
\begin{align} 
\label{eq:second_e}
    \frac{n}{m_{(1)}}\mathbb{E}_q\Bigg[\left\{\phi_2(\widetilde{X}_{(j, 1)}; \widehat{f}_{(-1)}) - \phi_2(\widetilde{X}_{(j, 1)}; f_0)\mid \widehat{f}_{(-1)}, \widehat{r}_{(-1)}\right\}^2  \mid D_{(2)}, \widetilde{D}_{(2)}\Bigg]=\mathrm{o}_p(1). 
\end{align}
The first equation \eqref{eq:first_e} is proved by 
\begin{align*}
    &\mathbb{E}_p\Bigg[\Big\{\dot{g}_{\bm{\beta}_0}\big(X_{(i, 1)}\big)\big(Y_{(i, 1)}-\widehat{f}_{(-1)}\big(X_{(i, 1)}\big)\big)\widehat{r}_{(-1)}\big(X_{(i, 1)}\big) \\ 
    &~~~~~ - \dot{g}_{\bm{\beta}_0}\big(X_{(i, 1)}\big)(Y_{(i, 1)}-f_0\big(X_{(i, 1)}\big))r_0\big(X_{(i, 1)}\big) \Big\}^2 \mid D_{(2)}, \widetilde{D}_{(2)}\Bigg]\\
    &=\mathbb{E}_p\Bigg[\Big\{\dot{g}_{\bm{\beta}_0}\big(X_{(i, 1)}\big)\big(Y_{(i, 1)}-\widehat{f}_{(-1)}\big(X_{(i, 1)}\big)\big)\widehat{r}_{(-1)}\big(X_{(i, 1)}\big) - \dot{g}_{\bm{\beta}_0}\big(X_{(i, 1)}\big)(Y_{(i, 1)}-f_0\big(X_{(i, 1)}\big))\widehat{r}_{(-1)}\big(X_{(i, 1)}\big)\\
    &\ \ \ \ \  + \dot{g}_{\bm{\beta}_0}(x)(Y_{(i, 1)}-f_0\big(X_{(i, 1)}\big))\widehat{r}_{(-1)}\big(X_{(i, 1)}\big) - \dot{g}_{\bm{\beta}_0}(x)(Y_{(i, 1)}-f_0\big(X_{(i, 1)}\big))r_0\big(X_{(i, 1)}\big)\Big\}^2 \mid D_{(2)}, \widetilde{D}_{(2)}\Bigg]\\
    &\leq 2\mathbb{E}_p\Bigg[\Big\{\dot{g}_{\bm{\beta}_0}\big(X_{(i, 1)}\big)\big(Y_{(i, 1)}-\widehat{f}_{(-1)}\big(X_{(i, 1)}\big)\big)\widehat{r}_{(-1)}\big(X_{(i, 1)}\big) \\ 
    &~~~~~~~~~~  - \dot{g}_{\bm{\beta}_0}\big(X_{(i, 1)}\big)(Y_{(i, 1)}-f_0\big(X_{(i, 1)}\big))\widehat{r}_{(-1)}\big(X_{(i, 1)}\big)\Big\}^2\mid D_{(2)}, \widetilde{D}_{(2)}\Bigg]\\
    &~~~~~ + 2\mathbb{E}_p\Bigg[\Big\{\dot{g}_{\bm{\beta}_0}\big(X_{(i, 1)}\big)(Y_{(i, 1)}-f_0\big(X_{(i, 1)}\big))\widehat{r}_{(-1)}\big(X_{(i, 1)}\big) \\
    &~~~~~~~~~~ - \dot{g}_{\bm{\beta}_0}\big(X_{(i, 1)}\big)(Y_{(i, 1)}-f_0\big(X_{(i, 1)}\big))r_0\big(X_{(i, 1)}\big)\Big\}^2 \mid D_{(2)}, \widetilde{D}_{(2)}\Bigg]\\
    &\leq 2C \big\| f_0 - \widehat{f}_{(-1)}\big\|^2_2 + 2 \times 4 R^2_{\max}\big\|\widehat{r}_{(-1)} - r_0 \big\|^2_2
\end{align*}
Here, we have used a parallelogram law from the second line to the third line. We have use $0<\widehat r_0<C$ and $Y_{(i, 1)}, |\widehat{f}_{(-1)}|<R_{\max}$ according to the Assumptions \ref{asm:bounded_output} and \ref{asm:bounded_dens_ratio} and convergence rate conditions, from the third line to the fourth line.  The second equation \eqref{eq:second_e} is proved by Jensen's inequality.

\paragraph{Step~2: Proof of \eqref{eq:target5}.}

We have 
\begin{align*}
&\Bigg|\mathbb{E}_p\Big[\phi_1(X_{(i, 1)}, Y_{(i, 1)}; \widehat{f}_{(-1)}, \widehat{r}_{(-1)})\mid \widehat{f}_{(-1)}, \widehat{r}_{(-1)}\Big] - \mathbb{E}_p\Big[\phi_1(X_{(i, 1)}, Y_{(i, 1)}; f_0, r_0)\mid \widehat{f}_{(-1)}, \widehat{r}_{(-1)}\Big]\\
&\ \ \ \ \ + \mathbb{E}_q\Big[\phi_2(\widetilde{X}_{(j, 1)}; \widehat{f}_{(-1)})\mid \widehat{f}_{(-1)}\Big]- \mathbb{E}_q\Big[\phi_2(\widetilde{X}_{(j, 1)}; f_0)\mid \widehat{f}_{(-1)}, \widehat{r}_{(-1)}\Big]\Bigg|\\
&= \Bigg|\mathbb{E}_p\Big[\dot{g}_{\bm{\beta}_0}\big(X_{(i, 1)}\big)\big(Y_{(i, 1)}-\widehat{f}_{(-1)}\big(X_{(i, 1)}\big)\big)\widehat{r}_{(-1)}\big(X_{(i, 1)}\big)\mid \widehat{f}_{(-1)}, \widehat{r}_{(-1)}\Big] \\ 
&~~~~~ - \mathbb{E}_p\Big[\dot{g}_{\bm{\beta}_0}\big(X_{(i, 1)}\big)\big(Y_{(i, 1)}-f_0\big(X_{(i, 1)}\big)\big)r_0\big(X_{(i, 1)}\big)\Big]\\
&~~~~~ + \mathbb{E}_q\Big[\dot{g}_{\bm{\beta}_0}\big(X_{(i, 1)}\big)\widehat{f}_{(-1)}(\widetilde{X}_{(j, 1)})\mid \widehat{f}_{(-1)}\Big] - \mathbb{E}_q\Big[\dot{g}_{\bm{\beta}_0}(x)f_0(\widetilde{X}_{(j, 1)})\Big]\Bigg|\\
&= \Bigg|\mathbb{E}_p\Big[\dot{g}_{\bm{\beta}_0}(\widetilde{X}_{(j, 1)})\big(f_0\big(X_{(i, 1)}\big)-\widehat{f}_{(-1)}\big(X_{(i, 1)}\big)\big)\widehat{r}_{(-1)}\big(X_{(i, 1)}\big)\mid \widehat{f}_{(-1)}, \widehat{r}_{(-1)}\Big] - 0\\
&\ \ \ \ \ + \mathbb{E}_q\Big[\dot{g}_{\bm{\beta}_0}(\widetilde{X}_{(j, 1)})\widehat{f}_{(-1)}(\widetilde{X}_{(j, 1)})\mid \widehat{f}_{(-1)}\Big] - \mathbb{E}_q\Big[\dot{g}_{\bm{\beta}_0}(\widetilde{X}_{(j, 1)})f_0(\widetilde{X}_{(j, 1)})\mid \widehat{f}_{(-1)}, \widehat{r}_{(-1)}\Big]\Bigg|\\
&= \Bigg|\mathbb{E}_p\Big[\dot{g}_{\bm{\beta}_0}\big(X_{(i, 1)}\big)\big(f_0\big(X_{(i, 1)}\big)-\widehat{f}_{(-1)}\big(X_{(i, 1)}\big)\big)\big(\widehat{r}_{(-1)}\big(X_{(i, 1)}\big) - r_0\big(X_{(i, 1)}\big)\big)\Big]\Bigg|\\
&=\mathrm{o}_p(n^{-1/2}).
\end{align*}
All elements of the vector is bounded by a universal constant.
Here, we have used \Holder's inequality $ \|fg \|_1 \leq  \|f \|_2  \|g \|_2$.

\section{Semi-Covariate Shift Adaptation}
Our remaining problem is a lower bound for the asymptotic variances of estimators of $\widetilde{\bm{\beta}}^*$, which is referred to as the semiparametric efficiency bounds. To discuss semiparametric efficiency bounds, we restrict a class of estimators into regular estimators \citep{bickel98,VaartA.W.vander1998As}. An efficiency bound is defined for an estimand under some posited models of the DGP \citep{bickel98}. If this posited model is a parametric model, it is equal to the \Cramer-Rao lower bound. When this posited model is non or semiparametric model, we can still define a corresponding \Cramer-Rao lower bound. For regular estimators, semiparametric efficiency bounds are given as follows.
\begin{theorem}
\label{thm:efficiency}
Consider linear regression models $Y = f_0(X) + \varepsilon / \sqrt{r_0(X)}$. Denote the second moment of the limiting distribution of $\sqrt{n}\Big(\widehat{\bm{\beta}}-\bm{\beta}_0\Big)$ by $\mathrm{AMSE}\big[\widehat{\bm{\beta}}\big]$. Then, any regular estimators $\widehat{\bm{\beta}}$ satisfy
$\mathrm{AMSE}\big[\widehat{\bm{\beta}}\big]\geq V^*$, where 
\begin{align*}
     V^* = \sigma^2\mathbb{E}_q\left[\dot{g}_{\bm{\beta}_0}(\widetilde{X})\dot{g}^\top_{\bm{\beta}_0}(\widetilde{X})\right]^{-1}.
\end{align*}
\end{theorem}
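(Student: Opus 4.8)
The plan is to identify $V^*$ as the semiparametric efficiency bound by constructing the efficient influence function and then invoking the convolution theorem (H\'ajek--Le Cam), which represents the limit law of $\sqrt{n}(\widehat{\bm{\beta}}-\bm{\beta}_0)$ as the convolution of $\mathcal{N}(\bm{0},V^*)$ with independent noise, so that its second moment dominates $V^*$ in the Loewner order. First I would fix the semiparametric model $\mathcal{M}$ for the two stratified samples: the shared conditional law $w(y\mid x)$ is constrained only through $\mathbb{E}[Y\mid X=x]=g_{\bm{\beta}_0}(x)$ and $\mathbb{E}[(Y-g_{\bm{\beta}_0}(X))^2\mid X=x]=\sigma^2/r_0(x)$, while the residual shape and the marginals $p,q$ (hence $r_0=q/p$) are otherwise unrestricted. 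A regular one-dimensional submodel is indexed by scores $(s_w,s_p,s_q)$ with $\mathbb{E}[s_w\mid X]=0$, $\mathbb{E}_p[s_p]=0$, $\mathbb{E}_q[s_q]=0$; since the samples are independent and $w$ enters only the labelled likelihood, an influence function is a pair $(\psi_1,\psi_2)$ acting on train and test data, and the relevant inner product is $\langle(\psi_1,\psi_2),(\phi_1,\phi_2)\rangle=\mathbb{E}_p[\psi_1\phi_1^\top]+\rho\,\mathbb{E}_q[\psi_2\phi_2^\top]$, reflecting $m=\rho n$.

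Next I would compute the pathwise derivative $\dot{\bm{\beta}}_0:=\partial_t\bm{\beta}_0(t)\big|_{t=0}$ by differentiating the first-order condition $\mathbb{E}_{q_t}[\dot{g}_{\bm{\beta}_0(t)}(\widetilde{X})(\widetilde{Y}-g_{\bm{\beta}_0(t)}(\widetilde{X}))]=\bm{0}$ totally in $t$. Writing $G:=\mathbb{E}_q[\dot{g}_{\bm{\beta}_0}(\widetilde{X})\dot{g}_{\bm{\beta}_0}^\top(\widetilde{X})]$, the Jacobian in $\bm{\beta}$ equals $-G$ (the curvature term drops because $\mathbb{E}[\widetilde{Y}-g_{\bm{\beta}_0}\mid\widetilde{X}]=0$ under correct specification), and the $s_q$-direction contributes nothing for the same reason; only $s_w$ survives, giving $\dot{\bm{\beta}}_0=G^{-1}\mathbb{E}_q[\dot{g}_{\bm{\beta}_0}(\widetilde{X})(\widetilde{Y}-g_{\bm{\beta}_0}(\widetilde{X}))\,s_w]=G^{-1}\mathbb{E}_p[r_0(X)\dot{g}_{\bm{\beta}_0}(X)(Y-g_{\bm{\beta}_0}(X))\,s_w]$. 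That $\bm{\beta}_0$ is insensitive to $s_p,s_q$ is precisely the statement that in the correctly specified case the estimand is pinned down by the conditional mean alone.

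I would then read off the efficient influence function as the Riesz representer of this derivative inside $\overline{\mathcal{T}}$. Matching the $s_p$ and $s_q$ directions forces $\mathbb{E}_p[\psi_1\mid X]=\bm{0}$ and $\psi_2\equiv\bm{0}$, while matching $s_w$ requires $\mathbb{E}_p[\psi_1 s_w]=G^{-1}\mathbb{E}_p[r_0\dot{g}_{\bm{\beta}_0}(Y-g_{\bm{\beta}_0})s_w]$ for all $s_w\in\mathcal{T}_w$; the minimal-variance solution compatible with these constraints is $\psi_1^{*}(Y,X)=G^{-1}r_0(X)\dot{g}_{\bm{\beta}_0}(X)(Y-g_{\bm{\beta}_0}(X))$. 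Its second moment is $V^*=\mathbb{E}_p[\psi_1^{*}(\psi_1^{*})^\top]=G^{-1}\mathbb{E}_p[r_0^2(X)\,\mathbb{E}[(Y-g_{\bm{\beta}_0})^2\mid X]\,\dot{g}_{\bm{\beta}_0}\dot{g}_{\bm{\beta}_0}^\top]G^{-1}$, and substituting the model variance $\sigma^2/r_0(X)$ together with importance reweighting $\mathbb{E}_p[r_0(X)\varphi(X)]=\mathbb{E}_q[\varphi(\widetilde{X})]$ collapses this to $\sigma^2G^{-1}GG^{-1}=\sigma^2G^{-1}$, the claimed bound. As a sanity check this coincides with $\Omega$ of Theorem~\ref{thm:main} once $\sigma^2(\widetilde{X})r(\widetilde{X})=\sigma^2$, so the DR estimator attains it.

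The step I expect to be the main obstacle is pinning down the nuisance tangent space inside $\mathcal{T}_w$ and verifying genuine orthogonality, so that $\psi_1^{*}$ is the efficient rather than merely a valid influence function. Perturbations of the residual law that hold the conditional mean fixed are exactly the scores $s_w$ with $\mathbb{E}[s_w\mid X]=0$ and $\mathbb{E}[(Y-g_{\bm{\beta}_0})s_w\mid X]=0$; since $\psi_1^{*}$ is linear in the residual, $\langle\psi_1^{*},s_w\rangle=G^{-1}\mathbb{E}_p[r_0\dot{g}_{\bm{\beta}_0}\,\mathbb{E}[(Y-g_{\bm{\beta}_0})s_w\mid X]]=0$, so $\psi_1^{*}$ annihilates every nuisance direction, which is the content of the assertion that inverse-variance weighting $r_0(X)\dot{g}_{\bm{\beta}_0}(X)(Y-g_{\bm{\beta}_0}(X))/\sigma^2$ is the efficient score. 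A secondary care point is that the variance functional $\sigma^2/r_0$ couples the conditional constraint to the marginals through $r_0$, so one must confirm that this coupling does not create an additional $s_p$- or $s_q$-component in the representer; the vanishing of $\dot{\bm{\beta}}_0$ along $s_p,s_q$ shows it does not. Once $\psi_1^{*}\in\overline{\mathcal{T}}$ is confirmed to represent $\dot{\bm{\beta}}_0$, the convolution theorem yields $\mathrm{AMSE}[\widehat{\bm{\beta}}]\geq\mathbb{E}[\psi_1^{*}(\psi_1^{*})^\top]=V^*$ for every regular estimator, completing the argument.
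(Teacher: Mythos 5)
A preliminary remark: the paper states Theorem~\ref{thm:efficiency} \emph{without any proof} (the appendix section simply asserts it), so there is no paper argument to compare yours against; I can only judge your proposal on its merits. Your overall architecture — pathwise differentiability of $\bm{\beta}_0$, identification of an influence function, convolution theorem — is the right one, your candidate $\psi_1^{*}=G^{-1}r_0(X)\dot{g}_{\bm{\beta}_0}(X)(Y-g_{\bm{\beta}_0}(X))$ is the correct efficient influence function, and your variance computation collapsing it to $\sigma^2G^{-1}=V^*$ is correct. Nevertheless there is a genuine gap at exactly the step you flagged as the main obstacle.

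The gap is that your certification of efficiency is vacuous. You argue that $\psi_1^{*}$ is efficient because it is orthogonal to every nuisance score $s_w$ with $\mathbb{E}[s_w\mid X]=0$ and $\mathbb{E}[(Y-g_{\bm{\beta}_0}(X))s_w\mid X]=0$. But orthogonality to nuisance scores is the defining property of \emph{every} gradient, not a distinguishing property of the efficient one: for instance the unweighted OLS influence function $\big(\mathbb{E}_p[\dot{g}_{\bm{\beta}_0}\dot{g}^\top_{\bm{\beta}_0}]\big)^{-1}\dot{g}_{\bm{\beta}_0}(X)(Y-g_{\bm{\beta}_0}(X))$ passes the identical test and is not efficient. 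What must be shown is that $\psi_1^{*}$ lies in the closed tangent space. Concretely: in the conditional-mean-restriction model, the orthocomplement of the nuisance tangent space consists of functions $h(X)(Y-g_{\bm{\beta}_0}(X))$, gradients are those with $\mathbb{E}_p[h(X)\dot{g}^\top_{\bm{\beta}_0}(X)]=I$, and one must minimize $\mathbb{E}_p[h h^\top\sigma^2(X)]$ over this class (a Cauchy--Schwarz/Gauss--Markov step), obtaining $h^*\propto\dot{g}_{\bm{\beta}_0}/\sigma^2(X)$. It is precisely here — and nowhere else — that the theorem's noise structure $\sigma^2(X)=\sigma^2/r_0(X)$ does its work, since it makes $\dot{g}_{\bm{\beta}_0}/\sigma^2(X)\propto r_0\dot{g}_{\bm{\beta}_0}$, i.e.\ makes your $\psi_1^{*}$ equal to the inverse-variance-weighted (Chamberlain) efficient score. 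Your write-up uses the noise structure only to \emph{evaluate} the variance of $\psi_1^{*}$, never to certify its efficiency; without that step the convolution theorem only delivers $\mathrm{AMSE}\geq\mathrm{Var}(\mathrm{EIF})$ together with $\mathrm{Var}(\mathrm{EIF})\leq V^*$, which is an inequality pointing the wrong way for the claim $\mathrm{AMSE}\geq V^*$.

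A second, related problem is your model specification. You build the restriction $\mathbb{E}[(Y-g_{\bm{\beta}_0}(X))^2\mid X]=\sigma^2/r_0(X)$ into the model, yet your tangent space carries free scores $(s_w,s_p,s_q)$ with no second-moment constraint — these two choices are inconsistent. The distinction is not cosmetic: if the variance restriction is genuinely part of the model, the tangent space shrinks (and is coupled to the marginals through $r_0$), the efficient influence function becomes the projection of $\psi_1^{*}$ onto that smaller space, and its variance is generically \emph{strictly below} $\sigma^2G^{-1}$ whenever the residuals have nonzero conditional third moment — knowledge of the variance function can be exploited to improve estimation of a mean parameter, exactly as in the classical known-variance location problem. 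In that model the theorem's inequality would fail for estimators exploiting the second-moment information. For the statement to be true, the model must be the one in which the residual law is unrestricted beyond the conditional mean restriction, with the displayed noise structure describing only the particular truth at which the bound is evaluated; your proof should be set up that way, and then completed with the minimization step described above.
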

Compared to the semiparametric efficiency bound, our proposed estimator has the larger asymptotic variance; that is, our estimator is not efficient. This property is curious because \citet{KatoUehara2020} shows semiparametric efficiency of estimators under covariate shift in off-policy evaluation. We conjecture that the above lower bound can be improved, which is an open issue. 

To reduce the variance, we propose the following estimator, referred to as an estimator of semi-covariate shift adaptation (SCSA): , referred to as an estimator of semi-covariate shift adaptation (SCSA): $\widehat{\bm{\beta}}^{\mathrm{SCSA}, \alpha} := \argmin_{\bm{\beta} \in \Theta}\sum_{\ell\in\{1, 2\}}\widehat{\mathcal{R}}^{\mathrm{SCSA}}_{(\ell)}(\bm{\beta})$, 
where
\begin{align*}
\widehat{\mathcal{R}}^{\mathrm{SCSA}, \alpha}_{(\ell)}(\bm{\beta}) 
&:= \frac{1}{n_{(\ell)}}\sum^{n_{(\ell)}}_{i=1}\Big\{\left(Y_{(i, \ell)} - g_{\bm{\beta}}(X_{(i, \ell)}) \right)^2- \left(\widehat{f}_{(-\ell)}(X_i) - g_{\bm{\beta}}(X_{(i, \ell)})\right)^2\Big\} \widehat{r}^\alpha_{(-\ell)}(X_{(i, \ell)}) \\ 
&~~~~~ + \frac{1}{m_{(\ell)}}\sum^{m_{(1)}}_{j=1}\left\{\left(\widehat{f}_{(-\ell)}(\widetilde{X}_{(j, \ell)}) - g_{\bm{\beta}}(\widetilde{X}_{(j, \ell)})\right)^2\right\}.
\end{align*}

\begin{corollary}
\label{thm:main3}
Suppose that Assumptions~\ref{asm:bounded_output}--\ref{eq:unique_beta}, \ref{asm:linear_dep}--\ref{asm:nuisance_conv} hold. If $m = \rho n$ for some universal constant $\rho$ and $f_0 \in \mathcal{G}$, then the SCA estimator $\widehat{\bm{\beta}}^{\mathrm{SCSA}, \alpha}$ has the following asymptotic distribution:
    \begin{align*}
        &\sqrt{n}\left(\widetilde{\bm{\beta}}_0 - \widehat{\bm{\beta}}^{\mathrm{SCSA}, \alpha}\right)\xrightarrow{\mathrm{d}} \mathcal{N}\left(\bm{0}, \Omega_\alpha\right),
    \end{align*}
where
\begin{align*}
    &\Omega_\alpha = \sigma^2\mathbb{E}_q\left[\dot{g}_{\bm{\beta}_0}(\widetilde{X})\dot{g}^\top_{\bm{\beta}_0}(\widetilde{X})\right]^{-1} \mathbb{E}_q\left[r^{2\alpha - 1}(\widetilde{X})\dot{g}_{\bm{\beta}_0}(\widetilde{X})\dot{g}^\top_{\bm{\beta}_0}(\widetilde{X})\right]\mathbb{E}_q\left[\dot{g}_{\bm{\beta}_0}(\widetilde{X})\dot{g}^\top_{\bm{\beta}_0}(\widetilde{X})\right]^{-1}.
\end{align*}
When $\alpha = 1/2$, $\Omega_\alpha = \sigma^2\mathbb{E}_q\left[\dot{g}_{\bm{\beta}_0}(\widetilde{X})\dot{g}^\top_{\bm{\beta}_0}(\widetilde{X})\right]^{-1}$ holds, which matches the efficiency bound. 
\end{corollary}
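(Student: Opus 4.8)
The plan is to replicate the proof of Theorem~\ref{thm:asymp_dist} essentially verbatim, with the weight $\widehat{r}_{(-\ell)}$ replaced everywhere by $\widehat{r}^\alpha_{(-\ell)}$ and the oracle weight $r_0$ replaced by $r_0^\alpha$. Writing the first-order condition $h^{\mathrm{SCSA}}(\widehat{\bm{\beta}}^{\mathrm{SCSA},\alpha})=\bm{0}$ for the gradient of $\sum_{\ell}\widehat{\mathcal{R}}^{\mathrm{SCSA},\alpha}_{(\ell)}$ and Taylor-expanding around $\widetilde{\bm{\beta}}_0$ (which coincides with $\bm{\beta}_0$ because $f_0\in\mathcal{G}$) yields the decomposition $\sqrt{n}(\widetilde{\bm{\beta}}_0-\widehat{\bm{\beta}}^{\mathrm{SCSA},\alpha})=\sqrt{n}\,A B$, exactly as in the main proof, where $A$ is the inverse empirical Hessian and $B$ is the empirical score at $\bm{\beta}_0$ with the estimated nuisances plugged in.

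First I would dispatch the Jacobian. The argument for $A\xrightarrow{\mathrm{p}}-\mathbb{E}_q[\dot{g}_{\bm{\beta}_0}(\widetilde{X})\dot{g}^\top_{\bm{\beta}_0}(\widetilde{X})]^{-1}$ carries over unchanged: the train-data contribution to $\dot{h}$ is proportional to $(Y-\widehat{f}_{(-\ell)})\widehat{r}^\alpha_{(-\ell)}\ddot{g}_{\bm{\beta}_0}$, whose conditional mean vanishes as $\widehat{f}_{(-\ell)}\to f_0$, while the test-data average $\frac{1}{m_{(\ell)}}\sum_j\dot{g}_{\bm{\beta}_0}\dot{g}^\top_{\bm{\beta}_0}$ converges by the law of large numbers to $\mathbb{E}_q[\dot{g}_{\bm{\beta}_0}(\widetilde{X})\dot{g}^\top_{\bm{\beta}_0}(\widetilde{X})]$; the power $\alpha$ plays no role here. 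The substance is an analog of Lemma~\ref{lem:nuisance}: that $B$ is asymptotically equivalent to the oracle train score $\frac{1}{n}\sum_i(Y_i-f_0(X_i))r_0^\alpha(X_i)\dot{g}_{\bm{\beta}_0}(X_i)$ up to $o_p(n^{-1/2})$. I would split $\sqrt{n}(B-B_{\mathrm{oracle}})$ into an empirical-process part and a conditional-bias part as in Appendix~\ref{appdx:proof_lemma}; the empirical-process part is handled identically by cross-fitting (conditional mean zero, conditional variance $o_p(1)$) using $\|\widehat{f}_{(-\ell)}-f_0\|_2\to0$, $\|\widehat{r}_{(-\ell)}-r_0\|_2\to0$, boundedness, and an elementary bound relating $\|\widehat{r}^\alpha_{(-\ell)}-r_0^\alpha\|_2$ to $\|\widehat{r}_{(-\ell)}-r_0\|_2$ on the bounded range of the density ratio.

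Once the lemma is in place, a CLT applied to the i.i.d.\ oracle summands gives $\sqrt{n}B\xrightarrow{\mathrm{d}}\mathcal{N}(\bm{0},\mathbb{E}_p[\sigma^2(X)r_0^{2\alpha}(X)\dot{g}_{\bm{\beta}_0}\dot{g}^\top_{\bm{\beta}_0}])$, and the change of measure $\mathbb{E}_p[\,\cdot\,]=\mathbb{E}_q[\,\cdot/r_0]$ converts the middle factor to $\mathbb{E}_q[\sigma^2(\widetilde{X})r_0^{2\alpha-1}(\widetilde{X})\dot{g}_{\bm{\beta}_0}\dot{g}^\top_{\bm{\beta}_0}]$, which under homoskedastic $\sigma^2$ equals $\sigma^2\mathbb{E}_q[r^{2\alpha-1}\dot{g}_{\bm{\beta}_0}\dot{g}^\top_{\bm{\beta}_0}]$. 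Slutsky's theorem then assembles the sandwich $\Omega_\alpha$, and substituting $\alpha=1/2$ collapses the middle factor to $\sigma^2\mathbb{E}_q[\dot{g}_{\bm{\beta}_0}\dot{g}^\top_{\bm{\beta}_0}]$, giving $\Omega_{1/2}=\sigma^2\mathbb{E}_q[\dot{g}_{\bm{\beta}_0}\dot{g}^\top_{\bm{\beta}_0}]^{-1}$ and hence the match with $V^*$.

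The step I expect to be the genuine obstacle is the conditional-bias control inside the lemma. For the DR estimator ($\alpha=1$) the combined bias telescopes into the product $\mathbb{E}_p[(f_0-\widehat{f}_{(-\ell)})(\widehat{r}_{(-\ell)}-r_0)\dot{g}_{\bm{\beta}_0}]$, which is $o_p(n^{-1/2})$ by the product-rate condition in Assumption~\ref{asm:nuisance_conv} together with \Holder's inequality. With the $\alpha$-power weight, however, the \emph{unweighted} test-data correction no longer orthogonalizes the nuisance $f$: the bias becomes $\mathbb{E}_p[(f_0-\widehat{f}_{(-\ell)})(\widehat{r}^\alpha_{(-\ell)}-r_0)\dot{g}_{\bm{\beta}_0}]$, which I would split as $\mathbb{E}_p[(f_0-\widehat{f}_{(-\ell)})(\widehat{r}^\alpha_{(-\ell)}-r_0^\alpha)\dot{g}_{\bm{\beta}_0}]+\mathbb{E}_p[(f_0-\widehat{f}_{(-\ell)})(r_0^\alpha-r_0)\dot{g}_{\bm{\beta}_0}]$. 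The first summand is again a product and is controlled as before, but the second is only first order in $\widehat{f}_{(-\ell)}-f_0$ and does not vanish at the $\sqrt{n}$ rate when $\alpha\neq1$. Making the statement go through therefore requires either strengthening the rate on $\widehat{f}_{(-\ell)}$ so that $\mathbb{E}_p[(f_0-\widehat{f}_{(-\ell)})(r_0^\alpha-r_0)\dot{g}_{\bm{\beta}_0}]=o_p(n^{-1/2})$, or redefining the test-data correction with a density-ratio reweighting that restores Neyman orthogonality at the chosen $\alpha$. Resolving this cleanly, and reconciling the resulting variance with the efficiency bound of Theorem~\ref{thm:efficiency}, is where the real care is needed.
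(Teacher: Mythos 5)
Your route is the only one the paper offers: Corollary~\ref{thm:main3} is stated without proof, the implicit argument being ``transplant the proof of Theorem~\ref{thm:main} and Lemma~\ref{lem:nuisance} with $\widehat{r}_{(-\ell)}$ replaced by $\widehat{r}^{\alpha}_{(-\ell)}$.'' The parts of that transplant you carry out are correct (the Jacobian limit is unaffected by $\alpha$; the cross-fitted empirical-process term is handled as before since $\|\widehat{r}_{(-\ell)}-r_0\|_2\to 0$ and boundedness give $\|\widehat{r}^{\alpha}_{(-\ell)}-r_0^{\alpha}\|_2\to 0$; the CLT plus the change of measure $\mathbb{E}_p[\cdot]=\mathbb{E}_q[\cdot/r_0]$ produces the middle factor $\sigma^2\mathbb{E}_q[r^{2\alpha-1}\dot{g}_{\bm{\beta}_0}\dot{g}^\top_{\bm{\beta}_0}]$ and the collapse at $\alpha=1/2$). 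More importantly, the obstruction you identify is genuine and is precisely where the transplant breaks. Conditioning on the nuisance fold and using $\mathbb{E}[Y|X]=f_0(X)$, $g_{\bm{\beta}_0}=f_0$, and $\mathbb{E}_q[\psi]=\mathbb{E}_p[\psi\, r_0]$, the conditional bias of the SCSA score at $\bm{\beta}_0$ is exactly
\begin{align*}
\mathbb{E}_p\big[(f_0-\widehat{f}_{(-\ell)})\,\widehat{r}^{\alpha}_{(-\ell)}\,\dot{g}_{\bm{\beta}_0}\big]
+\mathbb{E}_q\big[(\widehat{f}_{(-\ell)}-f_0)\,\dot{g}_{\bm{\beta}_0}\big]
=\mathbb{E}_p\big[(f_0-\widehat{f}_{(-\ell)})\,(\widehat{r}^{\alpha}_{(-\ell)}-r_0)\,\dot{g}_{\bm{\beta}_0}\big],
\end{align*}
and $\widehat{r}^{\alpha}_{(-\ell)}-r_0$ does not vanish unless $\alpha=1$: it converges to the fixed nonzero function $r_0^{\alpha}-r_0$. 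Equivalently, the Gateaux derivative of the population score with respect to $f$ in direction $h$ is $\mathbb{E}_p[h\,(r_0-r_0^{\alpha})\dot{g}_{\bm{\beta}_0}]\neq 0$ for $\alpha\neq 1$, because the \emph{unweighted} test-sample correction can only cancel the $f$-perturbation of a train term weighted by $r_0$ itself. The bias is therefore $O_p(\|\widehat{f}_{(-\ell)}-f_0\|_2)$, which Assumption~\ref{asm:nuisance_conv} makes $o_p(1)$ but not $o_p(n^{-1/2})$, so the analog of Lemma~\ref{lem:nuisance} fails and the corollary as stated cannot be derived from the paper's assumptions when $\alpha\neq 1$. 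This is a gap in the statement, not in your execution.

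Two cautions about the repairs you sketch. First, the rate repair requires the linear functional $\mathbb{E}_p[(f_0-\widehat{f}_{(-\ell)})(r_0^{\alpha}-r_0)\dot{g}_{\bm{\beta}_0}]$ to be $o_p(n^{-1/2})$; since Cauchy--Schwarz is the only generic tool, this effectively demands a parametric-type rate for $\widehat{f}_{(-\ell)}$ (or special structure such as asymptotic linearity of $\widehat{f}_{(-\ell)}$ for that functional), which defeats the purpose of the nonparametric first stage. Second, reweighting the test-data correction by $\widehat{r}^{\alpha-1}_{(-\ell)}$ does restore orthogonality (the train derivative $-\mathbb{E}_p[h\,r_0^{\alpha}\dot{g}_{\bm{\beta}_0}]$ and test derivative $\mathbb{E}_q[h\,r_0^{\alpha-1}\dot{g}_{\bm{\beta}_0}]=\mathbb{E}_p[h\,r_0^{\alpha}\dot{g}_{\bm{\beta}_0}]$ cancel), but it also changes the Hessian: the bread of the sandwich becomes $\mathbb{E}_q[r^{\alpha-1}\dot{g}_{\bm{\beta}_0}\dot{g}^\top_{\bm{\beta}_0}]^{-1}$, so at $\alpha=1/2$ the variance is $\sigma^2\mathbb{E}_q[r^{-1/2}\dot{g}_{\bm{\beta}_0}\dot{g}^\top_{\bm{\beta}_0}]^{-1}\mathbb{E}_q[\dot{g}_{\bm{\beta}_0}\dot{g}^\top_{\bm{\beta}_0}]\mathbb{E}_q[r^{-1/2}\dot{g}_{\bm{\beta}_0}\dot{g}^\top_{\bm{\beta}_0}]^{-1}$, not the stated $\Omega_{1/2}=\sigma^2\mathbb{E}_q[\dot{g}_{\bm{\beta}_0}\dot{g}^\top_{\bm{\beta}_0}]^{-1}$. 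Hence neither repair recovers Corollary~\ref{thm:main3} exactly as written, and the advertised match with the bound of Theorem~\ref{thm:efficiency} (which is itself derived under the heteroskedastic model $Y=f_0(X)+\varepsilon/\sqrt{r_0(X)}$, whereas $\Omega_\alpha$ pulls out a homoskedastic $\sigma^2$) would need to be re-examined along with the statement.
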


\section{Additional Experimental Results}
\label{appdx:exp}
This section provides additional experimental results. 

\subsection{The DR Estimator}
First, we show the estimation errors of $f_0$ of the OLS, WLS, and the DR in Figures~\ref{fig:exp1_kernel} and Figure~\ref{fig:exp2_kernel}. In Figure~\ref{fig:exp1_kernel}, we conduct experiments with misspecified models. In Figure~\ref{fig:exp2_kernel}, we conduct experiments with correctly specified models. Additionally, we show the results in boxplots of Figure~\ref{fig:exp3_boxplot}. 

In addition to the OLS, the WLS, the NP, and the DR, we conduct experiments with four additional methods. The first method is covariate shift adaptation via importance weighting with the ross-fitting. Unlike the DR, we do not use estimators of $f_0$ to reduce the bias of the estimation error of $r_0$. The second method is covariate shift adaptation via nonparametric regression. In this method, we first estimate $f_0$ by using the nonparametric kernel ridge regression. Then, we regress the estimator $\widehat{f}(X_i)$ on $X_i$. We also apply the cross-fitting for the method. The third method is the DR covariate shift adaptation without the cross-fitting. The fourth method is covariate shift adaptation via nonparametric regression without the cross-fitting (the second method without the cross-fitting). We refer to these methods as the WLS (CF), the CSA-NP (CF), the DR, and the CSA-NP, respectively. We show the results in Table~\ref{tbl:exp_res2}. 

From the results, we conclude that the reason why the DR outperforms the OLS and the WLS is because of the existence of the nonparametric estimators of $f_0$. In fact, nonparametric estimators outperform the DR. We also find that the benefits of cross-fitting are limited. This result does not deny the necessity of the cross-fitting because it guarantees the $\sqrt{n}$-consistency.

\begin{figure}[h]
    \centering
    \begin{subfigure}[b]{0.45\textwidth}
        \centering
        \includegraphics[width=60mm]{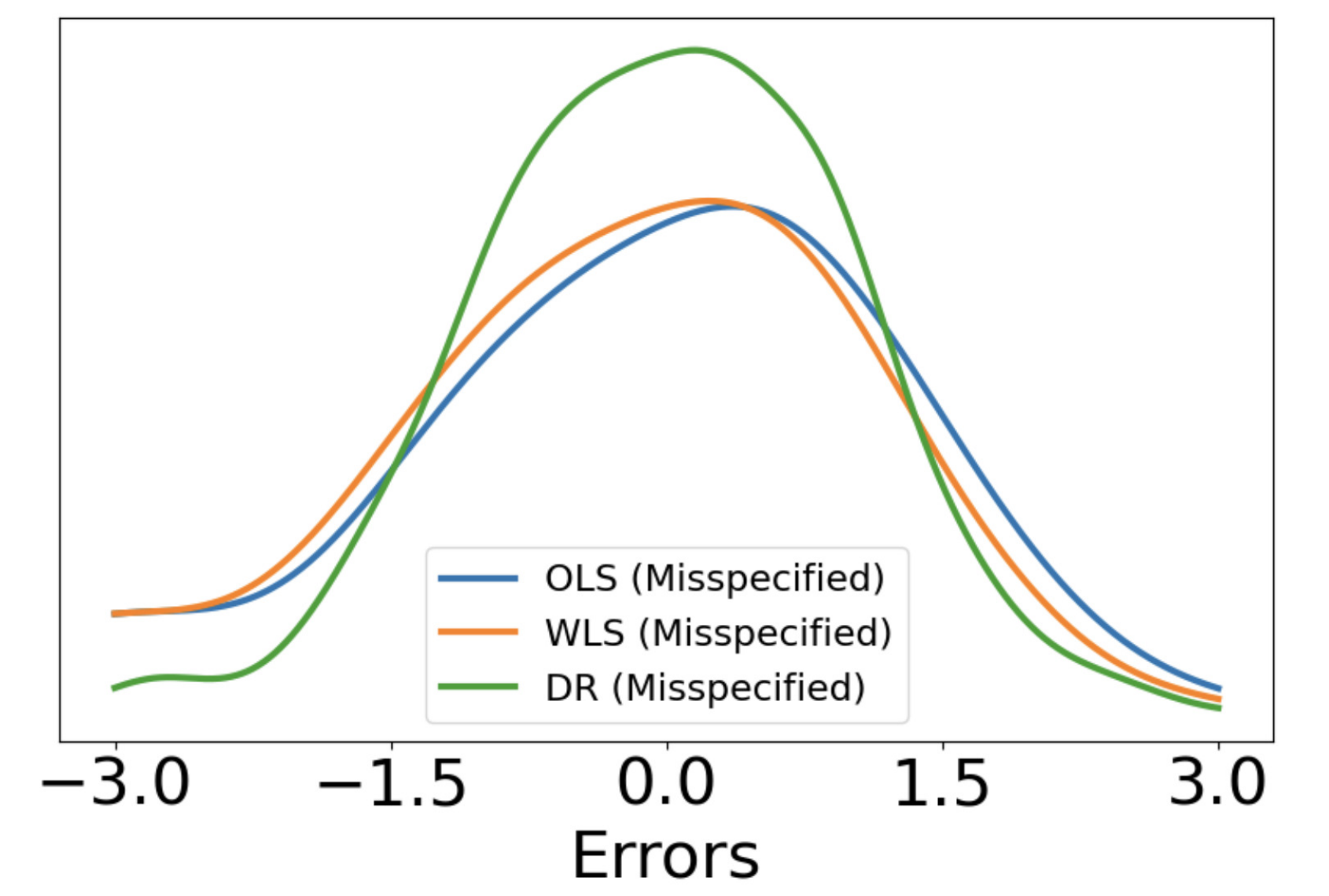}
        \caption{Estimation errors of simulation studies with misspecified models}
        \label{fig:exp1_kernel}
    \end{subfigure}
    \hfill
    \begin{subfigure}[b]{0.45\textwidth}
        \centering
       \includegraphics[width=60mm]{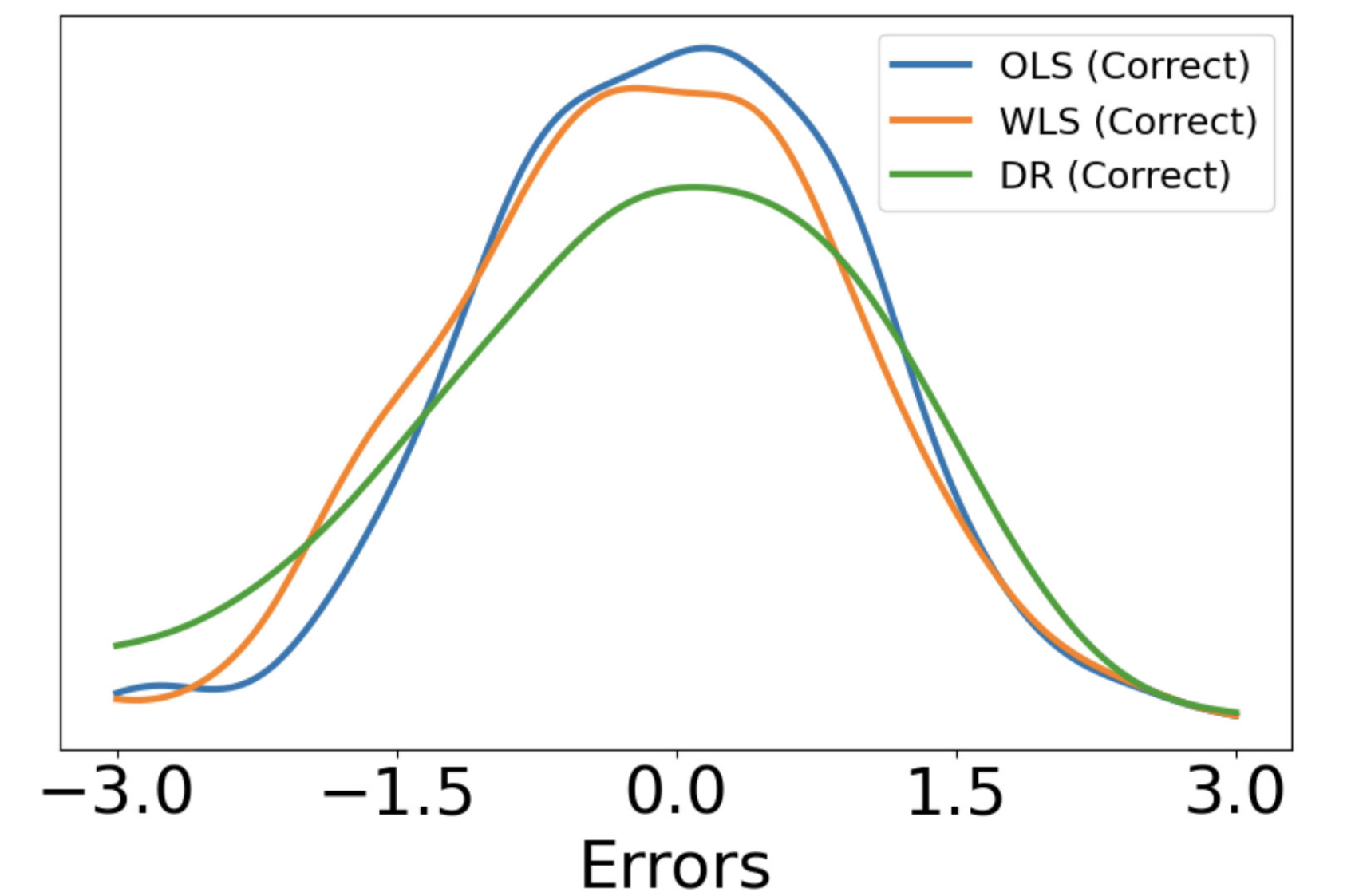}
        \caption{Estimation errors of simulation studies with correctly specified models}
        \label{fig:exp2_kernel}
    \end{subfigure}
=\end{figure}

\begin{figure*}[h]
  \centering
    \includegraphics[width=150mm]{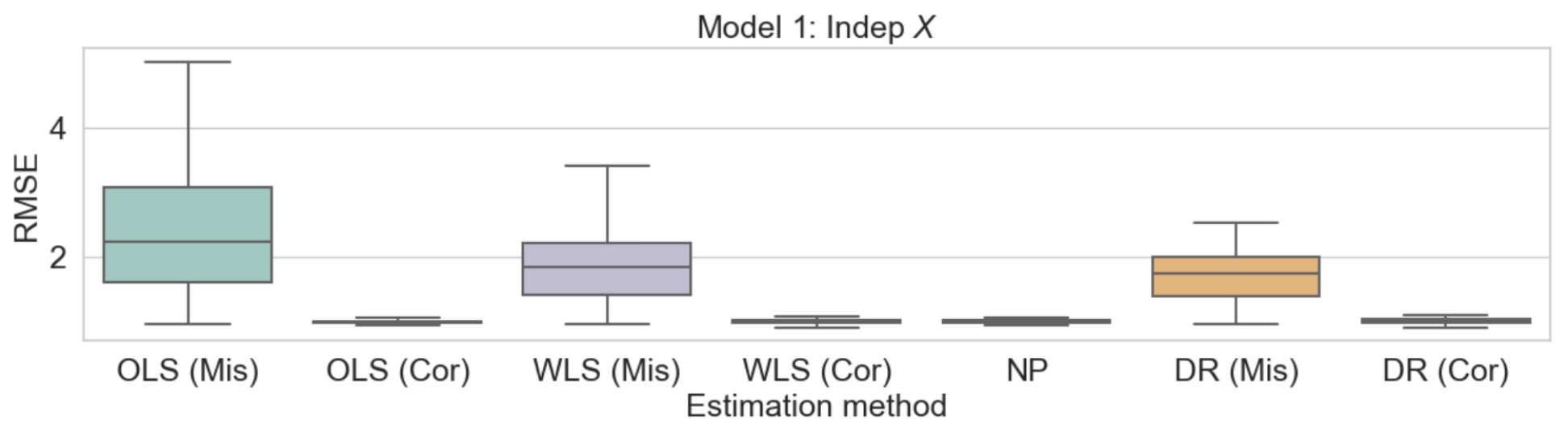}
\caption{RMSEs of experiments about the DR covariate shift adaptation.}
\label{fig:exp3_boxplot}
\end{figure*}

\begin{table*}[h!]
    \centering
    \caption{Results in Section~\ref{sec:exp}. The mean squared errors are reported.}
    \label{tbl:exp_res2}
    \scalebox{0.68}[0.68]{
    \begin{tabular}{|l|l|r|r|r|r|r|r|r|r|r|r|r|r|r|r|r|}
    \hline
       & &  \multicolumn{2}{|c|}{OLS}  &     \multicolumn{2}{|c|}{WLS}  &     \multicolumn{1}{|c|}{NP}  &       \multicolumn{2}{|c|}{DR (CF)}  &      \multicolumn{2}{|c|}{WLS (CF)}  &      \multicolumn{2}{|c|}{CSA-NP (CF)} &     \multicolumn{2}{|c|}{DR} &      \multicolumn{2}{|c|}{CSA-NP} \\
          \hline
   & & Mis & Cor & Mis & Cor & Mis & Mis & Cor & Mis & Cor & Mis & Cor & Mis & Cor  & Mis & Cor  \\
    \hline
    \multirow{2}{*}{Model~1} & Indep $X$ &  7.295 &  1.008 &  3.820 &  1.022 &  1.025 &  3.085 &  1.078 &  11.870 &  1.029 &  3.062 &  1.009 &  3.067 &  1.081 &  3.063 &  1.011 \\
    & Corr $X$ &  6.359 &  1.019 &  3.596 &  1.026 &  1.030 &  3.054 &  1.049 &   9.622 &  1.046 &  3.009 &  1.023 &  3.013 &  1.060 &  3.009 &  1.022 \\
    \multirow{2}{*}{Model~2} & Indep $X$ &  0.194 &  0.154 &  0.195 &  0.156 &  0.158 &  0.189 &  0.198 &   0.239 &  0.156 &  0.184 &  0.151 &  0.185 &  0.304 &  0.185 &  0.151 \\
    & Corr $X$ &  0.206 &  0.168 &  0.214 &  0.170 &  0.183 &  0.203 &  0.175 &   0.231 &  0.169 &  0.202 &  0.172 &  0.205 &  0.313 &  0.201 &  0.173 \\
    \bottomrule
    \end{tabular}
    }
\end{table*}

\subsection{The SDB Estimator}
Finally, we conduct experiments using the SDB estimator in Figure~\ref{fig:exp3_sdb}. Although the estimator has theoretically preferable properties, the performance is lower than the OLS and the WLS. This is because the preferable performance of the DR estimator stems from the existence of nonparametric estimators of $f_0$, and if we do not use them, the performance just worsens by using the SDB estimator. However, note that from the viewpoint of asymptotic theory, we can guarantee the performance for the SDB estimator by showing the $\sqrt{n}$-consistency. Therefore, as a tool for theoretical analysis, the SDB estimator is still useful.  

\begin{figure*}[h]
  \centering
    \includegraphics[width=140mm]{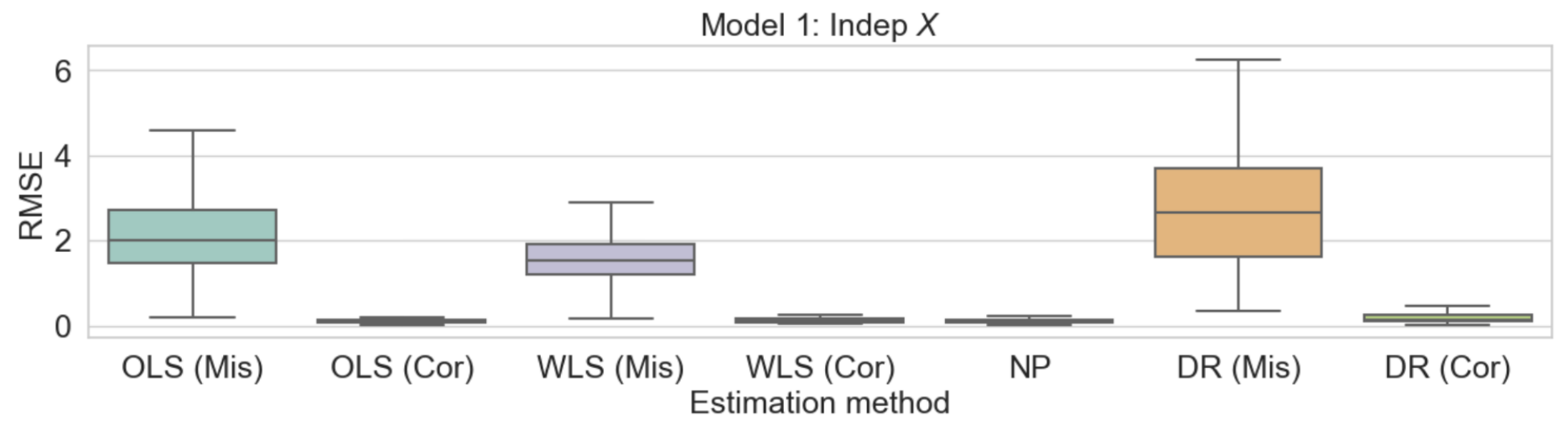}
\caption{RMSE of the SDB covariate shift adaptation.}
\label{fig:exp3_sdb}
\end{figure*}

\end{document}